\newenvironment{ctable*}
{\begin{table*}[t]\begin{center}}{\end{center}\end{table*}}
\newtheorem{theorem}{Theorem}[section]
\newtheorem{lemma}[theorem]{Lemma}
\newtheorem{proposition}[theorem]{Proposition}
\theoremstyle{definition}
\newtheorem{definition}[theorem]{Definition}
\newcommand{\Xomit}[1]{ }
\newcommand{\OPT}{\ensuremath{\mathit{OPT}}}
\newcommand{\sig}{\varphi}
\renewcommand{\vec}[1]{\boldsymbol{#1}}
\newcommand{\ind}{\mathbf{1}}
\newcommand{\norm}[1]{\lVert #1 \rVert}
\newcommand{\ip}[1]{\langle #1 \rangle}
\newcommand{\ct}{c}
\newcommand{\niggly}[1]{\eta(\vec{#1})}
\newcommand{\Csmith}{C^{SR}}
\newcommand{\Crand}{C^{R}}
\newcommand{\Cps}{C^{PS}}
\newcommand{\Capr}{C^{A}}
\newcommand{\csmith}{c^{SR}}
\newcommand{\cps}{c^{PS}}
\newcommand{\capr}{c^{A}}
\newcommand{\crand}{c^{R}}
\newcommand{\Macs}{I}
\newcommand{\Jobs}{J}
\newcommand{\p}[1]{\doproctime#1\relax}
\def\doproctime#1,#2\relax{p_{#1 #2}}
\newcommand{\pow}[1]{\dopow#1\relax}
\def\dopow#1,#2\relax{\rho_{#1 #2}}
\def\shf{\textsf{ShortestFirst}}
\def\lof{\textsf{LongestFirst}}
\def\pro{\textsf{ProportionalSharing}}
\def\equ{\textsf{EqualSharing}}
\def\smi{\textsf{SmithRule}}
\def\apr{\textsf{Approx}}
\newcommand{\Rand}{\textsf{Rand}}
\newcommand{\rand}{\Rand}
\newcommand\schedprob[2]{\ensuremath{#1|\,|#2}}
\def\unrel{\schedprob{R}{C_{\max}}}
\def\ident{\schedprob{P}{C_{\max}}}
\def\unrels{\schedprob{R}{\sum w_j\ct_j}}
\def\bipars{\schedprob{B}{\sum w_j\ct_j}}
\def\idents{\schedprob{P}{\sum w_j\ct_j}}
\newcommand\stv{\vec{x}}
\newcommand\stv*{\vec{x^*}}
\newcommand{\st}{x}
\newcommand\st*{x^*}
\newcommand{\St}{X}
\newcommand\St*{X^*}
\newcommand\stv'{\vec{x'}}
\newcommand{\randterm}[1]{\dorandterm#1\relax}
\def\dorandterm#1,#2,#3\relax{\frac{\pow{#1,#2}\pow{#1,#3}}{\pow{#1,#2} + \pow{#1,#3}}}
\newcommand{\smithterm}[1]{\dosmithterm#1\relax}
\def\dosmithterm#1,#2,#3\relax{\min\{\pow{#1,#2}, \pow{#1,#3}\}}
\newcommand{\Reals}{\mathbb R}
\newcommand{\R}{\Reals}
\newcommand{\Rplus}{\R_+}
\newcommand{\Naturals}{\mathbb N}
\newcommand{\N}{\Naturals}
\newcommand{\poa}{price of anarchy}
\renewcommand{\Pr}{\mathbb{P}}
\newcommand{\PotPS}{\Phi^{PS}}
\newcommand{\PotR}{\Phi^R}
\newcommand{\PotA}{\Phi^A}
\newcommand{\PotD}{D}
\def\clap#1{\hbox to 0pt{\hss#1\hss}}
\def\mathrlap{\mathpalette\mathrlapinternal}
\def\mathrlapinternal#1#2{\rlap{$\mathsurround=0pt#1{#2}$}}
\newcommand{\Sz}{\kappa}
\newcommand{\sparagraph}[1]{\medskip \noindent \emph{#1}}
\begin{document}

\title{ {\bf Inner Product Spaces for MinSum Coordination Mechanisms}\thanks{This work was supported in part by NSF grant CCF0830516, by FONDECYT grant 1090050, and by Andreas Mentzelopoulos Scholarships for the University of Patras.}}

\author{ Richard Cole
         \thanks{{\tt cole@cims.nyu.edu}. Courant Institute, New York University, N.Y.}
        \and
		Jos\'e R. Correa
		\thanks{\texttt{jcorrea@dii.uchile.cl}.  Departamento de Ingenier\'{\i}a Industrial, Universidad de Chile.}
		\and 
         Vasilis Gkatzelis 
         \thanks{{\tt gkatz@cims.nyu.edu}. Courant Institute, New York University, N.Y.}
        \and
         Vahab Mirrokni
        \thanks{{\tt mirrokni@google.com}. Google Research, New York, N.Y.}
		\and 
		Neil Olver
		\thanks{\texttt{olver@math.mit.edu}. Department of Mathematics, MIT.}
    }

\date{}

\maketitle

\begin{abstract} 
We study policies aiming to minimize the weighted sum of completion
times of jobs in the context of coordination mechanisms for selfish scheduling
problems. Our goal is to design local policies that achieve a good price of anarchy in
the resulting equilibria for unrelated machine scheduling.
To obtain the approximation bounds, we introduce a new technique that while conceptually simple, seems to be quite powerful.
The method entails mapping strategy vectors into a carefully chosen inner product space; costs are shown to correspond to the norm in this space, and the Nash condition also has a simple description.
With this structure in place, we are able to prove a number of results, as follows.

First, we consider Smith's Rule, which orders the jobs on a machine in ascending processing time to weight ratio, and show that it achieves an approximation ratio of $4$.
We also demonstrate that this is the best possible for deterministic non-preemptive strongly local policies.
Since Smith's Rule is always optimal for a given fixed assignment, this may seem unsurprising,
but we then show that better approximation ratios can be obtained if either preemption or randomization is allowed.

We prove that \pro, a preemptive strongly local policy, achieves an approximation ratio of $2.618$ for the weighted sum of completion times, and an approximation ratio of $2.5$ in the unweighted case. Again, we observe that these bounds are tight. 
Next, we consider \rand, a natural non-preemptive but \emph{randomized} policy.
We show that it achieves an approximation ratio of at most $2.13$; moreover, if the sum of the weighted completion times is negligible compared to the cost of the optimal solution, this improves to $\pi/2$.

Finally, we show that both \pro\ and \rand\ induce potential games, and thus always have a pure Nash equilibrium (unlike Smith's Rule).
This also allows us to design the first \emph{combinatorial} constant-factor approximation algorithm minimizing weighted completion time 
for unrelated machine scheduling.
It achieves a factor of $2+\epsilon$ for any $\epsilon > 0$, and involves imitating best response dynamics using a variant of \pro\ as the policy.

\end{abstract}

\setcounter{page}{0} \thispagestyle{empty}

\newpage
\setcounter{page}{1}

%Introduction as well as preliminaries

\section{Introduction}

Traditionally, work in operations research has focused on finding
globally optimal solutions for optimization problems.
In tandem, computer scientists have long studied the effects of a lack of
different kinds of resources, mainly the lack of computational
resources in optimization.
In designing massive decentralized systems, the \emph{lack of coordination} among
different participating agents has become an important consideration. 
This issue is typically addressed
through distributed algorithms in which a central authority designs 
mechanisms (protocols) specifying the rules of the game,
with the goal that the independent and selfish
choices of the users result in a socially desirable outcome.  To
measure the performance of these algorithms, the global objective
function (social cost) is evaluated at equilibrium points for selfish
users. For games, probably the most accepted such measure is the \emph{price of
anarchy}~\cite{KP99}, the worst case ratio of the social cost at a
Nash equilibrium to that at a social optimum;
the same measure can be used for coordination mechanisms; sometimes we call this
their \emph{approximation factor} to highlight that this is a distinct
measure. 
%in these contexts as we will see.

The by now standard approach to bound the price of anarchy (PoA) when social cost is
taken to be the sum of individual costs works as follows \cite{R09}.
First, the social cost is bounded by using the equilibrium conditions,
noting that an individual is better off at equilibrium than she would
be if she unilaterally changed her strategy to the one she would use in a
centralized optimum. Second, the actual (weighted) sum of player costs
is also upper bounded, using an appropriately chosen inequality, by a
linear combination of the social cost of the equilibrium and the
social cost of an optimal solution. 

In this paper we establish a
methodology to deal with the second step of
this proof scheme. Our method interprets the sum in the second step as
an inner product on a suitable space. Then, we apply the
Cauchy-Schwartz inequality in the chosen inner product space, and go
back to the original space by applying a minimum norm distortion
inequality. Many of the existing results employ a special case of this
approach in which the costs can be expressed in terms of quadratic
polynomials to which the Cauchy-Schwartz inequality can be applied
directly without the need for an intermediate inner product space. 
We apply our new method in the context of scheduling jobs on unrelated machines.
Our method elucidates the hidden structure in the games we consider. 
Once the framework has been set up, our proofs become short and elegant, thus we anticipate that this method may prove useful elsewhere too. 

Specifically, we consider the classic problem of scheduling $n$ jobs
on $m$ unrelated machines from a game theoretic perspective. In this
situation, job $j$ takes time $\p{i,j}$ if processed on machine $i$,
and also has an associated weight $w_j$. Although the central goal is
to minimize the weighted sum of completion times of jobs, we consider
the \emph{scheduling game} in which each job is a fully informed player
wanting to minimize its individual weighted completion time, while each
machine announces a policy which it will follow in processing the jobs
it is assigned. Our goal is to choose the policy so as to minimize the
approximation ratio of the actual costs under this policy to the
optimal costs obtainable under any policy.
To this end, several approaches imposing incentives on self-interested
agents have been proposed, including some using monetary
transfers~\cite{Beckman56,CDR06,FJM04,CKK}, and others enforcing
strategies on a fraction of users as a Stackelberg
strategy~\cite{Bagchi84,KLO97,Roughgarden01,Stackelberg52}. Ultimately
one could also apply a VCG mechanism to achieve social efficiency. The
main drawback of these methods is the need for global knowledge of the
system. A different approach, and the focus of our paper, uses
coordination mechanisms~\cite{CKN09}, which only require local
computations.

More formally, a coordination
mechanism~\cite{CKN09,ILMS09,AJM08,C09,DT09} is a set of
\emph{local policies}, one per machine, specifying 
how the jobs assigned to that machine are scheduled.
Here, \emph{local} means that a machine's schedule must be a function only of the jobs
it is assigned, allowing the policy to be implemented in a distributed
fashion. We actually study \emph{strongly} local
policies, meaning that the schedule of any machine $i$ is a function
only of the processing times $\p{i,j}$, weights $w_j$ and IDs of jobs assigned to it. 
It will also be useful (especially when considering lowerbounds) for us to restrict attention to policies that always use the full capacity of a machine, and release jobs immediately upon completion. We call such policies \emph{prompt}.

Several local policies have been studied for machine scheduling problems in
the context of both greedy and local search algorithms~
\cite{IK77,FH79,SC80,DJ81,AAF97,ANR95,BNR02,V02}, as well as
coordination mechanisms~\cite{KP99,CV02,CKN09,ILMS09,AJM08,C09,DT09}.
Previous work mainly considered the makespan social cost as opposed to
the weighted sum of completion times addressed here.

\paragraph{Our Results.}
Employing our new technique, we develop the first constant-factor approximate
coordination mechanisms for the selfish machine scheduling problem for unrelated machines.
% scheduling.
We start by studying Smith's Rule \cite{S56}, in which
machines process jobs in increasing order of their processing time to
weight ratio. Here the space that appropriately fits our method is
$L^2$ and a norm distortion inequality is in fact not needed. 
We prove that the approximation factor for this
policy is exactly 4, improving upon a result by Correa and Queyranne~\cite{CQ10}. 
We also show that this is the best possible among all deterministic and non-preemptive strongly local coordination mechanisms, assuming the prompt property.
%(in the deterministic non-preemptive case, a prompt policy simply decides, based on what jobs are assigned to the machine, on an ordering to run the jobs).

The constant approximation ratio for the weighted sum of completion times is
in sharp contrast to the known super-constant inapproximability results for coordination 
mechanisms for the makespan function~\cite{AJM08,FS10}  (e.g., an $\Omega(m)$ 
lower bound for the shortest-first coordination mechanism). In fact, it is still open 
whether there is a coordination mechanism with a constant approximation ratio
for the makespan function.

Next, we go beyond the approximation ratio of $4$ using preemptive\footnote{By preemption we mean that the computation of a job is suspended and,
implicitly, resumed later.} and randomized
mechanisms. First, we consider a preemptive policy, generalizing that of D\"urr and
Thang~\cite{DT09}, in which each machine splits its processing
capacity among its assigned jobs in proportion to their weights. We
uncover a close connection of this policy to Smith's Rule, allowing us
to apply a similar proof strategy, but yielding a
significantly improved approximation factor of 2.618.
On the other hand, we prove that with anonymous jobs, no set of deterministic prompt policies,
%(i.e., machines using their full capacity and not delaying the jobs' release)
be they preemptive or not, can achieve a factor better than $2.166$. To break this new barrier we consider a policy 
in which jobs are randomly, but non-uniformly, ordered, based on their processing time to weight ratio.
Under this policy the appropriate space has to be carefully chosen and
uses a rather nonstandard inner product, induced by a Hilbert
matrix, whose $i,j$ entry equals $1/(i+j-1)$. A norm distortion
inequality is then needed to relate the norm in this space to the
original cost of an optimal schedule, leading to yet another
improvement in the approximation factor to $2.134$. Moreover, we show 
a lower bound of $5/3>1.666$ for this policy.

Finally, inspired by our preemptive mechanism, along
with the $\beta$-nice notion of~\cite{AAEMS08}, we design a new
\emph{combinatorial} $(2+\epsilon)$-approximation algorithm for 
optimizing the weighted sum of completion times on unrelated machines. 
This improves on the approximation factor of our mechanisms and complements
the known non-combinatorial  constant-factor approximation algorithms: 
a linear programming based $16/3$-approximation
algorithm~\cite{HSSW97}, then an improvement to $\tfrac32+\epsilon$
again based on linear programming~\cite{SS}, and finally the best currently
known factor, a $\tfrac32$-approximation 
based on a convex quadratic relaxation~\cite{SS99,Skut01}.

We obtain a number of other results, most of which are discussed in the 
appendices. 
For the unit weight case, using Smith's Rule, we obtain a constant upper bound on the price of
anarchy by a reduction from the priority routing model of Farzad et al.~\cite{FOV08},
as shown in Appendix~\ref{appendix:reduction}; however, the  resulting bound is not optimal.
In the unit weight case,  our 
preemptive mechanism simplifies to one, called {\equ} \cite{DT09}, 
in which jobs share the processing capacity of a machine equally. 
Then the approximation ratio is $2.5$, which follows either 
by a careful analysis based on local moves, or using our method with a 
modified Cauchy-Schwartz inequality.
In addition, in the case where the
weighted sum of processing times is negligible compared to the total
cost, our randomized policy has an approximation ratio of $\pi/2$,
which is tight. The latter follows by an interesting norm distortion
inequality obtained by Chung et al.~\cite{Chung1988SelfOrganizing},
for which we provide an alternative shorter proof. Furthermore,
although for the Smith's Rule policy pure equilibria may not exist
\cite{CQ10}, we show that all our preemptive and randomized mechanisms
result in exact potential games. This implies that
best-response dynamics of players converge to pure Nash Equilibria (PNE) and verifies that
PNE always exist. 
While we present our results for pure strategies and pure Nash equilibria,
we observe that  all the results can be stated within the smoothness framework of Roughgarden~\cite{R09}, 
and so all the bounds hold for more general equilibrium concepts including mixed Nash equilibria
and correlated equilibria. We assume that jobs
aim at minimizing their weighted completion time in the case of deterministic
policies, and expected weighted completion time for randomized ones.

It is important to stress that these bounds are on the price of
anarchy (or approximation ratio) of \emph{coordination mechanisms} and
not that of games; thus these results do not follow from seemingly similar
bounds for selfish routing~\cite{AAE05}. The fact that our preemptive
policy performs better than non-preemptive ones is in contrast
to existing results for the makespan social cost function where the
{\equ} policy achieves an approximation ratio of
$\Theta(m)$~\cite{DT09}, no better than {\shf}, which
schedules jobs in increasing order of their processing times (i.e.,
Smith's Rule in the unweighted case). In order to explain this
counter-intuitive result, we show that both our preemptive policy and
our randomized policy penalize each job with an extra
charge beyond its cost under Smith's Rule that is exactly equal to the
externality its scheduling causes.

%\nnote{I'd be happy with a short statement like this here, but I'm happy to have this shunted wherever.}
%\vnote{I wonder whether we should have one sentence or two citing the Roughgarden paper before this}

\paragraph{Other Related Work.}
Scheduling problems have long been studied from a centralized optimization perspective. 
We adopt the standard three filed notation $\alpha| \beta | \gamma$~\cite{GLLK79}. 
The first parameter defines the machine model, the last specifies
the objective function, while the second will not concern us in this
paper.

%We use a standard classification notation due to~\cite{GLLK79}. 
%Each variant is denoted by $\alpha |\beta |\gamma$;
%The first parameter $\alpha$ defines the machine model and the last parameter $\gamma$
%specifies the objective function to be minimized; the second parameter will not concern us in this paper.

Minimizing the sum of completion times is polynomial time solvable
even for unrelated machines~\cite{H73,BCS74}. For identical parallel
machines ($\schedprob{P}{\sum c_j}$), the {\shf} policy leads to an optimal
schedule at any pure Nash equilibrium\footnote{In \cite{ILMS09} it is
shown that these equilibria are exactly the solutions generated by the
shortest-first greedy algorithm.}~\cite{CMM67}. On the other hand,
minimizing the weighted sum of completion times is NP-complete even
for identical machines (\idents) \cite{LKB77}. Although the latter
admits a PTAS~\cite{SW00}, the general unrelated case (\unrels) is
APX-hard~\cite{HSW98} and constant factor approximation algorithms
have been proposed \cite{HSSW97,SS,SS99,Skut01}.

Coordination mechanism design was introduced by Christodoulou, Koutsoupias and
Nanavati~\cite{CKN09}. They analyzed the {\lof} policy w.r.t.\ the
makespan for identical machines ({\ident}) and also studied a selfish
routing game. Immorlica et al.~\cite{ILMS09} study four coordination
mechanisms for different machine scheduling problems and survey the
results for these problems.
They further study the speed of convergence to equilibria and the
existence of PNE for the {\shf} and {\lof} policies. Azar, Jain, and
Mirrokni~\cite{AJM08} showed that the {\shf} policy and in fact any strongly local ordering policy (defined in Section~\ref{sec:prelim})
does not achieve an approximation ratio better than $\Omega(m)$. Additionally,
they presented a non-preemptive local policy that achieves an
approximation ratio of $O(\log m)$ and a policy that induces potential
games and gives an approximatition ratio of $O(\log^2 m)$.
Caragiannis~\cite{C09} showed an alternative $O(\log m)$-approximate
coordination
mechanism that minimizes makespan for unrelated machine scheduling and does
lead to potential games. Fleischer and Svitkina~\cite{FS10} show a lower bound
of $\Omega(\log m)$ for all local ordering policies. 
It is still open whether there exists a coordination
mechanism (even preemptive or randomized) achieving
a constant approximation ratio for the makespan objective function.

More recently, D{\"u}rr and Thang proved that the {\equ} policy
results in potential
games, and achieves a PoA of $\Theta(m)$ for \unrel. In the context of
coordination mechanisms, an instance for which a preemptive policy has
an advantage over non-preemptive ones was also shown by
Caragiannis~\cite{C09}, who presented a local preemptive policy with
approximation factor $O(\log m / \! \log\log m)$, beating the lower
bound of $\Omega(\log m)$ for local ordering policies~\cite{FS10}.
Correa and Queyranne~\cite{CQ10} study the problem of minimizing the weighted
sum of completion times, show that Smith's Rule may induce games that
do not have PNE, and that the price of anarchy under this policy is
4 in a more restricted
environment than that considered here.

%Correa and Queyranne~\cite{CQ10} also study the problem of minimizing
%the weighted sum of completion times in machine scheduling. They show
%that Smith's Rule may induce games that do not possess PNE, and that
%the price of anarchy under this policy is 4 in a more restricted
%environment than the one in this paper.

Coordination mechanisms are related to local search algorithms.
Starting from a solution,
a local search algorithm iteratively moves to a neighbor solution
which improves the global objective.
This is based on a neighborhood relation that is defined on the set of
solutions. The local improvement moves in the local search algorithm
correspond to the best-response moves of
users in the game defined by the coordination mechanism. The speed of
convergence and the approximation factor of local search algorithms
for scheduling problems
have been studied mainly for the makespan objective 
function~\cite{DJ81,EKM03,FH79,IK77,SC80,SV01,V02,AAF97,ANR95}.
Our combinatorial approximation algorithm for the weighted sum of completion
time is the first local search algorithm for $R\vert \vert \sum w_i C_i$ and is
different from the previously studied algorithms for the makespan objective.

%\jnote{NOTE ON THIS WHEN WE TALK ABOUT THE ROBUST STUFF:}
%For now, let us concentrate on the case of \emph{pure strategies},
%where each user
%selects one machine to assign its job to.

\section{Preliminaries}\label{sec:prelim}

Throughout this paper, let $\Jobs$ be a set of $n$ jobs to be scheduled 
on a set $\Macs$ of $m$ machines. 
%Neil: I think the following is sufficiently clear in our general discussion
%Each job needs to be assigned to exactly one machine and each machine can process only one job at any time.
% Neil: well let's just wait until the discussion below then, it doesn't seem urgent.
%In selfish scheduling problems, each job is owned by a selfish agent who decides which machine it will be scheduled on; see below for details. 
Let $\p{i,j}$ denote the processing time of job $j\in \Jobs$ on machine $i\in \Macs$ 
and let $w_j$ denote its weight (or importance or impatience). 
Our goal is to minimize the weighted sum of the completion times 
of the jobs, i.e.\ $\sum_{j\in \Jobs} w_j c_j$, where $c_j$ is the completion time
of job $j$. 
An assignment of jobs to machines is represented by a vector $\stv$, where $\st_j$ gives the machine to which job $j$ is assigned.

The main scheduling model we study is \emph{unrelated} machine scheduling (\unrels) in which the $\p{i,j}$'s are arbitrary.
% (following the notation by Graham et al. \cite{GLLK79}). 
%Another machine scheduling model we consider is the \emph{restricted identical} machines model (\bipars), in which each job $j$ can be scheduled only on a subset $T_j$ of the machines, i.e., $\p{i,j}=p_j$ if $i\in T_j$ and $\p{i,j}=\infty$ otherwise. 
Another model is the \emph{restricted related} machines model in which each machine $i$ has a speed $q_i$ and each job $j$ has a processing requirement $p_j$:  job $j$ can be scheduled only on a subset $T_j$ of the machines, with processing time $\p{i,j}=p_j/q_i$ for $i\in T_j$, and $\p{i,j}=\infty$ otherwise. 
The \emph{restricted identical} machines model is the special case of the restricted related machines model where all machines have the same speed.

%Vahab: do we need to define a schedule $s$ formally.
% Neil: there was a fair bit of repetition in this paragraph, I've trimmed it down. This is all very basic stuff, so best not to waste time on it.
A \emph{coordination mechanism} is a set of local policies, one for each machine, that determines how to schedule the jobs assigned to that machine. It thereby defines a game in which there are $n$ agents (jobs) and each agent's strategy set is the set of machines $\Macs$. Given an assignment $\stv$, the disutility of job $j$ is its 
weighted completion time $w_jc_j(\stv)$, as determined by the policy on the machine $x_j$. 
The goal of each job is to choose a strategy (i.e., a machine) that minimizes its disutility. 
%As we are taking a game theoretic perspective, we will consider jobs as agents, attempting to selfishly minimize their completion time.
%We consider a \emph{normal-form game} among selfish jobs resulting from a coordination mechanism, and study its equilibria. The social cost function with respect to which we will be measuring the inefficiency of different schedules is the weighted sum of the completion times, i.e., $\sum_{j=1}^n w_j c_j(\stv)$, where $c_j(\stv)$ is the completion time of job $j$ in configuration $\stv$.  We will also be considering the unweighted sum of completion times. 
%
%\cite{MS96}
%Given a normal-form game, 
A strategy profile $\stv$ is a \emph{Nash equilibrium} if no player has an incentive to change strategy.
Our goal is to design coordination mechanisms which give such incentives to the players, that selfish behavior leads to equilibria with low social cost.
%, i.e., the machine which it is being processed by. 

A game is a \emph{potential game} if there exists a potential function over strategy profiles such that any player's deviation 
leads to a drop of the potential function if and only if its cost drops. A potential game is \emph{exact} if after each move, the changes to the potential function and to the player's cost are equal. It is easy to see that a potential game always possesses a PNE.

%\todo{Improve. Or maybe drop, since we need to have discussed this in the intro}
%In a randomized policy, the ordering of jobs on a machine is chosen from some distribution that depends on the processing times of the jobs on the machine.
%Jobs then aim to minimize their expected processing time.

\let\origdescription\description
\renewenvironment{description}{
  \setlength{\leftmargini}{0em}
  \origdescription
  \setlength{\itemindent}{0em}
}
{\endlist}

We define a machine's policy to be \emph{prompt} if the machine uses its full capacity and does not delay the release of any of its completed jobs.
We say that a policy satisfies the \emph{independence of irrelevant alternatives} or \emph{IIA} property if for any
pair of jobs, their relative ordering is independent of what other jobs are assigned to the machine.
This property property appears as an axiom in voting theory, bargaining theory and logic.
Notice that deterministic non-preemptive policies with the IIA property can be described simply by a fixed ordering of all jobs; jobs are scheduled according to this order. 
Thus we call such policies \emph{ordering} policies.

Here and throughout the paper, we use the shorthand notation $\pow{i,j}$ for the ratio $\p{i,j}/w_j$.
The coordination mechanisms we study in this paper use the same local policy on each machine, so henceforth we refer to a coordination mechanism using the name of the policy. 
The main policies we discuss are the following:
\begin{description}
	\item[\smi~\cite{S56}:] Jobs on machine $i$ are scheduled consecutively in increasing order of $\pow{i,j}$. In the unweighted case, this reduces to the \shf\ policy.
	\item[\pro:] Jobs are scheduled in parallel using time-multiplexing. At any moment in time, each uncompleted assigned job receives a fraction of the processor time equal to its weight divided by the total weight of uncompleted jobs on the machine.
		In the unweighted case, this gives the \equ\ policy.
	\item[\rand:] This randomized policy has the property that for any two jobs $j, j'$ assigned to machine $i$, the probability that job $j$ is run before job $j'$ is exactly $\frac{\pow{i,j'}}{\pow{i,j} + \pow{i,j'}}$. Thus larger (w.r.t. $\pow{i,j}$) jobs are more likely to appear later in the ordering. 
		We show how to implement this policy in Section~\ref{sec:rand}.
\end{description}

%Neil: we haven't yet defined Approx, so we should not mention it here. I also don't think we should define it here, it's fine to just introduce the C^A notation later.
For any configuration $\stv$, let $w_j c_j^{\alpha}(\stv)$ and $C^{\alpha}(\stv)$ denote the cost for player $j$ and 
the social cost respectively, where $\alpha \in \{SR,PS,SF,ES,R \}$ denotes the 
policy, namely {\smi}, {\pro}, {\shf}, {\equ} and \Rand, respectively. Finally, slightly 
abusing notation, let $\St_i=\{j\in \Jobs~|~\st_j=i\}$ denote the set of jobs that 
have chosen machine $i$ in configuration $\stv$, and define $\St*_i$ analogously
for $\stv*$.

A local policy for machine $i$ uses only the information about the jobs on the 
same machine $i$, but it can look at all the parameters of these jobs, including 
their processing times on other machines. By contrast, a \emph{strongly local} 
policy may depend only on the processing time that these jobs have on this machine $i$. 
%Neil: dropped this, we can say a few words earlier when we need it for background stuff.
%We say that a policy satisfies the \emph{independence of irrelevant alternatives} or 
%\emph{IIA} property if for any set $S$ of jobs and any two jobs $j,j'\in S$, if $j$ 
%has a smaller completion time than $j'$ in $S$, then $j$ should have a smaller completion 
%time than $j'$ in any set $S\cup\{ k\}$. In other words, whether $j$ or $j'$ is preferred 
%should not be changed by the availability of job $k$. The IIA property appears as an axiom 
%in voting theory, bargaining theory and logic.
%\nnote{Isn't the following definition equivalent and simpler: for any pair of jobs $j,j'$, their relative ordering is independent of what other jobs are assigned to the machine.}

%TODO move this earlier, up to where we discuss coordination mechanisms and NE, and before we define the policies.
In order to quantify the inefficiency caused by the lack of coordination, we use the notion of \emph{price of anarchy}~\cite{KP99}, that is, the ratio between the social cost value of the worst Nash equilibrium and that of the social optimum. 
%\remove[As discussed, let's deemphasize the robust stuff]{
%More specifically, we prove upper bounds for the \emph{robust PoA}, recently introduced by Roughgarden~\cite{R09}, which, as he shows, imply the same bounds for the inefficiency of pure and mixed Nash equilibria, correlated equilibria, and for regret minimization dynamics~\cite{BHLR08}. 
%%
%%
%More specifically, based on Roughgarden's definition, a cost-minimization game is $(\lambda,\mu)$-smooth if for every two outcomes $\stv$ and $\stv*$, 
%\begin{equation*}
%\sum_{j\in \Jobs}{w_jc_j(\stv_{-j},\st*_j)}\leq \lambda \sum_{j\in N}{w_jc_j(\stv*)}+\mu \sum_{j\in \Jobs}{w_jc_j(\stv)}.
%\end{equation*}
%%Vahab:How important is to have these definitions here??
%The \emph{robust price of anarchy} of such a game is then equal to:
%\begin{equation}\label{def:smooth}
%\inf\left\{\frac{\lambda}{1-\mu}:(\lambda,\mu)\text{ s.t. the game is $(\lambda,\mu)$-smooth}\right\}.
%\end{equation}
%}
To be more precise, we are interested in upper bounds for the PoA of coordination mechanisms rather than the PoA
of specific games. Applying~\cite{CKN09} to the current context, the PoA of a coordination mechanism is defined to be the maximum ratio, taken 
over all the games $G$ that the mechanism may induce, of the social cost of a Nash equilibrium of $G$ divided by the optimum social cost achievable for the scheduling problem underlying $G$. 
%In our context, we know that in the optimal schedule every machine uses {\smi}. 
%\todo{Need an equation like you had but for the PoA, to emphasise again the point.}
%It is important to note that this optimum social cost depends only on the weights and completion times, and not on the coordination mechanism. 

%\remove
%{Given this definition, we define a coordination mechanism $\alpha$ to be $(\lambda,\mu)$-smooth if for every two assignments $\stv$ and $\stv*$
%of any game that it may induce, 
%\begin{equation}\label{equ:smooth2}
%\sum_{j\in \Jobs}{w_jc_j^{\alpha}(\stv_{-j},\st*_j)}\leq \lambda \sum_{j\in N}{w_jc_j^{SR}(\stv*)}+\mu \sum_{j\in \Jobs}{w_jc_j^{\alpha}(\stv)}.
%\end{equation}
%Notice that, in our context, we know that the optimal schedule uses {\smi}, so we are
%using $c_j^{SR}(\stv*)$ for the players' completion times in $\stv*$ in the second sum of \eqref{equ:smooth2}\footnote{As we will see later on, the bounds come after the assumption that $\stv*$ is the optimal assignment.}. 
%The definition of the robust PoA of a coordination mechanism follows directly in a fashion similar to~\eqref{def:smooth} above.
%Also in \cite{R09}, a class of games is defined to be \emph{tight} if there exists a game instance with pure PoA equal to the upper 
%bound of the robust PoA of all games in this class\footnote{Or more generally, the upper bound equals
%the supremum of the pure PoA over those game instances for which a PNE exists.}. 
%\nnote{We may no longer know the following statement}
%As we will see, all the classes of games we study in this paper are tight.
%}

\section{Deterministic Non-Preemptive Coordination Mechanisms}

It is known that given an assignment of jobs to machines, in order to minimize the weighted 
sum of completion times, {\smi} is optimal~\cite{S56}. It is therefore only natural to consider
this policy as a good first candidate. Our first theorem shows
that using this rule will result in Nash equilibria with social cost at most a constant-factor of 4 away
from the optimum.

Our analysis uses the map $\sig: \Macs^\Jobs \rightarrow L_2([0, \infty))^\Macs$, which maps a configuration to a vector of functions as follows. If $\vec{f} = \sig(\stv)$, then
\[ f_i(y) = \sum_{\mathrlap{j \in \St_i: \pow{i,j} \geq y}}\, w_j ~~~~~~~~\mbox{(recall that $\pow{i,j} = \p{i,j}/w_j$).}\]

We let $\langle g, h \rangle := \int_0^\infty g(y)h(y)dy$ denote the usual inner product on $L_2$, and in addition define $\langle \vec{f}, \vec{g} \rangle := \sum_{i \in \Macs} \langle f_i, g_i \rangle$. In both cases, $\| \cdot \|$ refers to the induced norm.
We also define
\[ \niggly{x} = \sum_{j\in \Jobs} w_j p_{x_j j}. \]
We then have
\begin{lemma}\label{lem:smithcost} 
  For any configuration $\stv$, %if $\vec{f} = \phi(\stv)$ then
% \[  \Csmith(\stv) = \tfrac12 \sum_{i \in I} \int_0^\infty f_i(y)^2 dy \,+\, \tfrac12\sum_{j \in J} w_j p_{x_j j}. \]
%\[   
$\Csmith(\stv) = \tfrac12 \langle \sig(\stv), \sig(\stv)\rangle + \tfrac12\niggly{x}$. 
%\]
\end{lemma}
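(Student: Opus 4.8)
The plan is to verify the identity one machine at a time. Both sides split additively over machines: $\Csmith(\stv) = \sum_{i \in \Macs} \sum_{j \in \St_i} w_j c_j^{SR}(\stv)$ by definition of the social cost, $\ip{\sig(\stv),\sig(\stv)} = \sum_{i \in \Macs} \ip{f_i, f_i}$ by the definition of the inner product on the product space, and $\niggly{x} = \sum_{i \in \Macs} \sum_{j \in \St_i} w_j \p{i,j}$. So it suffices to show, for a fixed machine $i$, that $\sum_{j \in \St_i} w_j c_j^{SR}(\stv) = \tfrac12 \ip{f_i, f_i} + \tfrac12 \sum_{j \in \St_i} w_j \p{i,j}$, and then sum over $i$.

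First I would rewrite the left-hand side as a sum over pairs of jobs on machine $i$. Let $\preceq$ denote the order in which \smi\ runs the jobs of $\St_i$ (increasing $\rho_{ij} = \p{i,j}/w_j$, ties broken arbitrarily but consistently). Then the completion time of $j$ is the total processing time of the jobs that run no later than $j$, i.e.\ $c_j^{SR}(\stv) = \sum_{k \in \St_i:\, k \preceq j} \p{i,k}$. Multiplying by $w_j$ and summing over $j \in \St_i$, the contributions with $k = j$ add up to $\sum_{j \in \St_i} w_j \p{i,j}$, while each unordered pair $\{j,k\}$ with, say, $j \preceq k$ (hence $\rho_{ij} \le \rho_{ik}$) contributes $w_k \p{i,j} = w_j w_k \rho_{ij} = w_j w_k \min\{\rho_{ij}, \rho_{ik}\}$; this last expression is symmetric and so is insensitive to the tie-break. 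Thus $\sum_{j \in \St_i} w_j c_j^{SR}(\stv) = \sum_{j \in \St_i} w_j \p{i,j} + \sum_{\{j,k\} \subseteq \St_i,\, j \ne k} w_j w_k \min\{\rho_{ij}, \rho_{ik}\}$.

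Next I would expand the quadratic form $\ip{f_i, f_i} = \int_0^\infty f_i(y)^2\, dy$. Writing $f_i(y)^2 = \big(\sum_{j \in \St_i:\, \rho_{ij} \ge y} w_j\big)^2 = \sum_{j, k \in \St_i} w_j w_k \ind[\min\{\rho_{ij}, \rho_{ik}\} \ge y]$ and integrating term by term using $\int_0^\infty \ind[t \ge y]\, dy = t$, we get $\ip{f_i, f_i} = \sum_{j, k \in \St_i} w_j w_k \min\{\rho_{ij}, \rho_{ik}\}$, the sum running over all ordered pairs. The diagonal terms $j = k$ contribute $\sum_{j \in \St_i} w_j^2 \rho_{ij} = \sum_{j \in \St_i} w_j \p{i,j}$ (here is where $\rho_{ij} = \p{i,j}/w_j$ enters), and the off-diagonal terms contribute $2 \sum_{\{j,k\},\, j \ne k} w_j w_k \min\{\rho_{ij}, \rho_{ik}\}$. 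Hence $\tfrac12 \ip{f_i, f_i} + \tfrac12 \sum_{j \in \St_i} w_j \p{i,j} = \sum_{j \in \St_i} w_j \p{i,j} + \sum_{\{j,k\},\, j \ne k} w_j w_k \min\{\rho_{ij}, \rho_{ik}\}$, which is exactly the expression obtained for the left-hand side. Summing over all $i \in \Macs$ completes the argument.

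There is no deep difficulty here; the one point requiring care — the ``obstacle'' — is the bookkeeping of diagonal versus off-diagonal terms. The quadratic form $\ip{f_i, f_i}$ double-counts each unordered pair and additionally contributes a diagonal term for each job; the factor $\tfrac12$ in the statement exactly cancels the double-counting on the off-diagonal, and the half-weight diagonal term left over is precisely compensated by $\tfrac12 \niggly{x}$, which is why $\eta$ appears at all and why $\rho_{ij} = \p{i,j}/w_j$ is the right normalization. One should also check that ties in the $\rho$-values do not affect either side, which is immediate because both final expressions are written purely in terms of the symmetric quantities $w_j w_k \min\{\rho_{ij}, \rho_{ik}\}$.
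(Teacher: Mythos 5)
Your proof is correct and follows essentially the same route as the paper's: both expand $\langle \sig(\stv),\sig(\stv)\rangle$ via the indicator-function integral into $\sum_{j,k\in \St_i} w_jw_k\min\{\pow{i,j},\pow{i,k}\}$ and match this against the Smith's Rule cost, differing only in bookkeeping (you separate diagonal from off-diagonal pairs, the paper groups the inner sum by $j$ to read off $2\csmith_j-\p{i,j}$ directly). Your explicit remark that ties in the $\rho$-values do not affect either side is a nice touch the paper leaves implicit.
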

\begin{proof}
%Let $\rho_{ij} = \p{i,j}/w_j$.
Let $\vec{f} = \sig(\stv)$. 
  We have
  \begin{align*} 
	  \langle \sig(\stv), \sig(\stv)\rangle &= \sum_{i \in I} \int_0^\infty f_i(y)^2 dy\\
%	  &= \int_0^\infty \sum_{\substack{j: x_j = i\\ \pow{i,j} \geq y}} \sum_{\substack{k: x_k = i\\ \pow{i,k} \geq y}} w_j w_k dy\\
  &= \sum_{i \in I} \sum_{j \in \St_i}\sum_{k \in \St_i} w_jw_k \int_0^{\infty} \ind_{\pow{i,j} \geq y} \ind_{\pow{i,k} \geq y} dy \\
  &= \sum_{i \in I} \sum_{j \in \St_i}\sum_{k \in \St_i} w_jw_k\min\{\pow{i,j}, \pow{i,k}\}\\
  &= \sum_{i \in I} \sum_{j \in \St_i}w_j\Bigl(\,\,2\!\!\!\sum_{\substack{k \in \St_i\\ \pow{i,k} \leq \pow{i,j}}} \!\!\p{i,k}
  \,\,-\, \p{i,j} \Bigr)\\
  &= 2\Csmith(\stv) - \niggly{x}.
  \end{align*}
The result follows.
\end{proof}

\begin{theorem}\label{thm:smi}
	The 
% approximation factor for
 \poa\ of 
{\smi} for unrelated machines \textup{(\unrels)} is at most $4$.
\end{theorem}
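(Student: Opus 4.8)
The plan is to run the usual two-step price-of-anarchy argument, but carried out entirely inside the Hilbert space of Lemma~\ref{lem:smithcost}. Fix a Nash equilibrium $\stv$ and an optimal assignment $\stv*$, and set $\vec{f} = \sig(\stv)$, $\vec{g} = \sig(\stv*)$, with components $f_i, g_i \in L_2([0,\infty))$. By Lemma~\ref{lem:smithcost}, $\Csmith(\stv) = \tfrac12\|\vec f\|^2 + \tfrac12\niggly{x}$ and $\Csmith(\stv*) = \tfrac12\|\vec g\|^2 + \tfrac12\niggly{x^*}$; in particular $4\,\Csmith(\stv*) = 2\|\vec g\|^2 + 2\niggly{x^*}$, which is the quantity we will dominate.

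\emph{Equilibrium step.} For a job $j$ write $i^* = \st*_j$ and set $h_j(y) := w_j\,\ind_{\pow{i^*,j}\ge y}$, the contribution of $j$ to the $i^*$-th component of $\sig(\stv*)$, so that $g_i = \sum_{j\in\St*_i} h_j$. If $j$ deviates to $i^*$, its completion time there is at most $\p{i^*,j} + \sum_{k\in\St_{i^*}:\,\pow{i^*,k}\le\pow{i^*,j}}\p{i^*,k}$ (worst case: $j$ placed after all jobs of equal ratio). For each $k$ in that sum, $w_j\p{i^*,k} = w_jw_k\pow{i^*,k} = w_jw_k\min\{\pow{i^*,j},\pow{i^*,k}\}$, while for every other $k\in\St_{i^*}$ the quantity $w_jw_k\min\{\pow{i^*,j},\pow{i^*,k}\}$ is nonnegative; since $\langle h_j, f_{i^*}\rangle = \sum_{k\in\St_{i^*}} w_jw_k\min\{\pow{i^*,j},\pow{i^*,k}\}$, we get $w_j\sum_{k\in\St_{i^*}:\,\pow{i^*,k}\le\pow{i^*,j}}\p{i^*,k} \le \langle h_j, f_{i^*}\rangle$. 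Substituting into the Nash inequality gives $w_j c_j^{SR}(\stv) \le w_j\p{i^*,j} + \langle h_j, f_{i^*}\rangle$ for every $j$; summing over $j$, grouping by the target machine, and using $g_i = \sum_{j\in\St*_i} h_j$ to collapse the second term yields $\Csmith(\stv) \le \niggly{x^*} + \langle \vec g, \vec f\rangle$.

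\emph{Combining step.} Apply Cauchy--Schwarz in this space, $\langle \vec g, \vec f\rangle \le \|\vec g\|\,\|\vec f\|$ (no norm-distortion inequality is needed, as the cost already lives in $L_2$). Writing $A = \|\vec f\|$, $B = \|\vec g\|$, $b = \niggly{x^*}$, we have $\Csmith(\stv) \le b + AB$, and since $\Csmith(\stv) = \tfrac12 A^2 + \tfrac12\niggly{x} \ge \tfrac12 A^2$ it follows that $A^2 - 2AB - 2b \le 0$, hence $A \le B + \sqrt{B^2+2b}$. Substituting back, $\Csmith(\stv) \le b + AB \le b + B^2 + B\sqrt{B^2+2b} \le b + B^2 + (B^2 + b) = 2B^2 + 2b = 4\,\Csmith(\stv*)$, where $B\sqrt{B^2+2b}\le B^2+b$ holds because squaring both sides (both nonnegative) reduces it to $0\le b^2$.

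The last step and Lemma~\ref{lem:smithcost} itself are routine; the one place that takes care is the equilibrium step — correctly bounding the deviation completion time and, more to the point, recognizing that replacing $\p{i^*,k}$ by $w_k\min\{\pow{i^*,j},\pow{i^*,k}\}$ and then extending the sum to all of $\St_{i^*}$ only enlarges it, which is exactly what makes the sum of the $n$ Nash inequalities collapse to the single bilinear term $\langle \vec g,\vec f\rangle$. I expect that matching of the scheduling semantics to the inner product to be the main (and rather mild) obstacle.
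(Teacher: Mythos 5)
Your proposal is correct and follows essentially the same route as the paper: the same map $\sig$ into $L_2$, the same use of Lemma~\ref{lem:smithcost}, the same Nash-condition bound $\Csmith(\stv) \le \ip{\sig(\stv*),\sig(\stv)} + \niggly{x^*}$, and Cauchy--Schwarz. The only (immaterial) difference is the final algebra: the paper applies $AB \le B^2 + \tfrac14 A^2$ and absorbs $\tfrac14\norm{\vec f}^2 \le \tfrac12\Csmith(\stv)$, whereas you solve the quadratic inequality in $A=\norm{\vec f}$ explicitly; both yield the factor $4$.
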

\begin{proof} % [Proof of Theorem \ref{thm:smi}]
    Let $\stv$ and $\stv*$ be two assignments, with $\stv$ a Nash equilibrium,
    and write $\vec{f}=\sig(\stv)$, $\vec{f^*} = \sig(\stv*)$. %Also define $\rho(i,j) := \p{i,j}/w_j$.
    We assume for simplicity that all jobs have distinct ratios (of processing time to weight). 
   (This assumption is just for simplicity; alternatively, we could introduce a tie breaking rule.)

    We first calculate a job $j$'s completion time according to $\stv$, and use the Nash condition:
  \[ \csmith_j = \sum_{\substack{k: x_k = x_j\\ \pow{x_k,k} < \pow{x_j,j}}}\!\!\! \p{x_k,k} \,\,+\, \p{x_j,j} \leq \sum_{\substack{k: x_k = x^*_j\\ \pow{x_k,k} < \pow{x^*_j,j}}} \!\!\!\p{x_k,k} \,\,+\, \p{x^*_j,j}.
    \]
 
    \begin{align}
  \mbox{So}~~~~~~~~~~~~~~~     
    \Csmith(\stv) = \sum_j w_j \csmith_j &\leq \sum_{i \in \Macs} \sum_{j \in \St*_i} \Bigl( \sum_{\substack{k \in \St_i\\\pow{i,k} < \pow{i,j}}}w_jw_k\frac{p_{ik}}{w_k} + \p{i,j}w_j\Bigr) \notag\\
		&\leq \sum_{i \in \Macs} \sum_{j \in \St*_i} \sum_{k \in \St_i}w_jw_k\min\{\pow{i,k}, \pow{i,j}\} \,+\, \sum_{i \in \Macs} \sum_{j \in \St*_i} \p{i,j}w_j \notag \\ %\label{eq:minstep}\\
  & = \sum_{i \in \Macs} \sum_{j \in \St*_i} \sum_{k \in \St_i}w_jw_k \int_0^{\infty} \ind_{\pow{i,j} \geq y} \ind_{\pow{i,k} \geq y} dy \,+\, \niggly{x^*} ~~~~~~~~~~~ \notag\\
    &= \ip{\vec{f^*}, \vec{f}} \,+\, \niggly{x^*}.\notag
    \end{align}
%	this last line follows from similar manipulations to those in the proof of Lemma~\ref{lem:smithcost}. \todo{Improve}

  Now applying Cauchy-Schwartz, followed by the inequality $ab \leq a^2 + b^2/4$ for $a,b \geq 0$, we obtain
    \begin{align*}
    \Csmith(\stv) &\leq \norm{\vec{f}}\norm{\vec{f^*}} \,+\, \niggly{x^*}\\
    &\leq \norm{\vec{f^*}}^2 + \tfrac14\norm{\vec{f}}^2 + \niggly{x^*}\\
    &\leq 2\Csmith(\stv*) + \tfrac12 \Csmith(\stv) \qquad \text{by Lemma~\ref{lem:smithcost}}.
  \end{align*}
  Hence $\Csmith(\stv) \leq 4\Csmith(\stv*)$.
\end{proof}

The following result, proved in Appendix~\ref{appendix:lowerbound}, shows that (assuming promptness) no deterministic non-preemptive strongly local mechanism can do better than {\smi}.
This also implies that the bound of Theorem~\ref{thm:smi} is tight.
\begin{theorem}\label{thm:lowb}
%The  pure PoA of any set of non-preemptive strongly local policies satisfying the IIA property is at least 4. 
The pure PoA of any strongly local deterministic non-preemptive prompt coordination mechanism is at least 4. 
This is true even for the case of restricted identical machines (\bipars) with unweighted jobs.
\end{theorem}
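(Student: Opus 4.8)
The plan is to construct, for every deterministic non-preemptive prompt strongly local mechanism, a family of instances of restricted identical machine scheduling with unit weights on which some Nash equilibrium costs at least $4-o(1)$ times the optimum. The key observation is that, by promptness and the fact that jobs are unweighted and have equal processing requirement (within a gadget), the only freedom the mechanism has on a machine is the \emph{order} in which it runs the jobs assigned to it; on a set of identical unit-length jobs this order is some permutation, and $\sum_j c_j$ over $k$ such jobs is always exactly $\binom{k+1}{2}$ regardless of the order. So the mechanism's power reduces to how it interleaves jobs of different ``types'' (processing requirements) — but strong locality means the machine cannot see which other machines a job could have gone to, only its local processing time. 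I would exploit this by using jobs whose local processing time is the same on all machines they are allowed on, so the mechanism's behavior is forced to be symmetric across the relevant machines.

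The construction I have in mind is a recursive/layered gadget mimicking the tight example for {\smi}: build $\ell$ levels, where level-$t$ jobs have processing requirement $p^{(t)}$ chosen so that $p^{(1)} \ll p^{(2)} \ll \cdots$, and each machine at level $t$ is shared (in the equilibrium) by one ``new'' level-$t$ job together with many tiny jobs from lower levels that have been displaced there. In the optimal solution each job sits alone (or nearly alone) on its own machine, paying essentially its own processing time; in the equilibrium, the displaced small jobs pile up behind the big job and inflate $\sum_j c_j$. The Nash condition must be checked: a small job on a congested machine must not prefer to move to the machine holding its ``private'' alternative — this is arranged by making that alternative machine already occupied (in equilibrium) by enough load that the deviating job's completion time there is no smaller. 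Tuning the branching factors and the ratios $p^{(t)}/p^{(t-1)}$ so that each level contributes a factor approaching $2$ from the Cauchy–Schwarz/$ab\le a^2+b^2/4$ slack, and composing two such factors (one from overestimating the equilibrium cost against the cross term, one from the $\niggly{x^*}$ term), should push the ratio to $4-\eps$ as $\ell\to\infty$.

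The main obstacle will be handling the mechanism's arbitrary ordering rule: unlike the {\smi} lower bound where the order is known, here the adversary (the mechanism designer) may order jobs in any way, even depending on IDs, so the construction must be \emph{robust to every ordering}. I expect to resolve this by making all the ``large'' jobs on a given machine identical in processing time (so their internal order is irrelevant) and ensuring the displaced small jobs are so much smaller that wherever in the order the mechanism inserts them, the large job's completion time — hence the dominant cost contribution — is essentially unchanged, while the small jobs' completion times are essentially their arrival position times their own tiny length, again order-independent up to lower-order terms. A secondary subtlety is that ``prompt'' and ``strongly local'' must be invoked precisely: promptness rules out the mechanism stalling to sabotage the equilibrium existence, and strong locality forces the per-machine behavior on our symmetric jobs to be identical across machines, which is what lets the recursive gadget close up. I would also need to confirm that a pure Nash equilibrium of the desired form actually exists for each such mechanism — since deterministic non-preemptive mechanisms need not be potential games — which I would do by an explicit fixed-point / best-response argument on the finite, highly structured instance, or by choosing the gadget so that the intended profile is manifestly an equilibrium by the local comparisons above.
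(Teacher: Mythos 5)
There is a genuine gap, and it sits exactly at the hardest point of the theorem. Your construction correctly identifies that the social cost on a machine full of identical jobs is order-independent, but the \emph{Nash condition} is not: a strongly local deterministic policy may depend on job IDs, so when a job $j$ contemplates deviating to its alternative machine $i^*$, the policy on $i^*$ is free to place $j$ \emph{first} among the jobs already there. In that case $j$'s completion time on $i^*$ is essentially $\p{i^*,j}$, the existing load on $i^*$ is irrelevant, and the deviation is profitable --- so the intended profile is simply not an equilibrium. Your proposed fixes (make the large jobs identical so their internal order is irrelevant; make the displaced jobs tiny so their position barely matters) control the cost of the \emph{assignment}, not the cost of the \emph{deviation}, and your fallback of ``an explicit fixed-point / best-response argument'' does not engage with the fact that the mechanism is chosen adversarially after you fix the job set. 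The paper's proof resolves precisely this: it starts from a pool of $N \gg n$ candidate jobs and uses a pigeonhole-type combinatorial lemma (for any subset-dependent family of orderings $\pi_S$, there is a set $S$ of size $r$ such that at least $(N-r)/(r+1)$ jobs outside $S$ would be placed \emph{last} when added to $S$) to select, machine by machine, which concrete jobs fill each ``slot''; the unselected jobs are made spurious on an extra machine. This selection is the idea your proposal is missing.

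A secondary problem is the gadget itself. The tight ratio $4$ is delicate (it matches the upper bound of Theorem~\ref{thm:smi}), and the known construction achieving it uses unit-size jobs in groups of size $m/x^2$ with a harmonic-series cost accounting, not a multi-scale hierarchy $p^{(1)} \ll p^{(2)} \ll \cdots$ with jobs ``sitting alone'' at \OPT. Your heuristic of ``each level contributes a factor approaching $2$ \ldots\ composing two such factors'' is not an argument that your gadget reaches $4-\eps$; as written it reads more like a makespan-style construction, and you would need to actually compute the equilibrium-to-optimum ratio. I would recommend starting from the Correa--Queyranne instance for \smi\ verbatim, restricting each job to the two machines it uses at \OPT\ and at the Nash, and then concentrating all effort on the ordering-robustness issue above.
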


\section{Improvements with Preemption and Randomization}\label{sec:better}
%TODO Discuss RAND also in the below. Move some stuff to above subsection.

\subsection{Preemptive Coordination Mechanism}

In this section, we study the power of preemption and present {\pro}, a preemptive mechanism that is strictly better w.r.t.\ the PoA than any deterministic non-preemptive strongly local policy. These results create a clear dichotomy between such policies and {\pro}.
This may seem counter-intuitive at first, since, given an assignment of jobs to machines, using {\pro} instead of {\smi} only increases the social cost\footnote{Note that this is not the case for the makespan social cost function.} and doesn't decrease the cost of any player.

A better understanding of this result can be obtained by observing that in our context, preemptive policies can be thought of (and also implemented) as non-preemptive but also non-prompt policies.
Jobs are run in an appropriate order, but possibly delayed past their completion time.
(Notice however that such an implementation would technically disallow anonymous jobs, i.e., jobs that do not have IDs.)
From this perspective, {\pro} can be implemented by using {\smi} to determine the processing order, but then
holding each job back after it is completed by an amount equal to the total delay it 
causes to other jobs {\smi} schedules after it.
This fact can be seen explicitly in the first equation of the upcoming Lemma~\ref{lem:proc}.
In this way, the interests of a player are aligned with those of the group by having it ``internalize its externalities'', leading not only to better allocations but also to a better social cost, despite the extra charges. 
Additional advantages of this coordination mechanism are that, unlike {\smi}, it can handle anonymous jobs, and the games it induces always possess PNE.

\begin{lemma}\label{lem:proc}
	%For any machine $i$ using {\pro} and any pair of jobs $j,j'$ assigned to this machine, $c_j \leq c_{j'} \Leftrightarrow \frac{\p{i,j}}{w_j} \leq \frac{p_{ij'}}{w_{j'}}$. Also, 
	Given an assignment $\stv$, the weighted completion time of a job $j$ on some machine $i$ using {\pro}
	(whether currently assigned there or not) is 
%	The completion time of some job $j$ on this machine is: 
\begin{align}
	w_j\cps_j=&\sum_{\mathrlap{\substack{k \in \St_i \setminus \{j\} \\ \pow{i,k} \leq \pow{i,j}}}}{w_j\p{i,k}}\,+\, \sum_{\mathrlap{\substack{k \in \St_i\\ \pow{i,k} > \pow{i,j}}}}{w_k\p{i,j}} \,\,+\,\, w_j\p{i,j} \notag \\
	=& \sum_{\mathrlap{k \in \St_i \setminus \{j\}}} w_jw_k \min\{\pow{i,k}, \pow{i,j}\}\,\,+\,\, w_j\p{i,j}.\label{for:cost}
\end{align}
\end{lemma}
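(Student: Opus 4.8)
The plan is to follow the dynamics of the time‑multiplexed schedule on machine $i$ directly. Fix $i$, let $S = \St_i \cup \{j\}$ be the set of jobs present on it (if $j$ is not currently assigned to $i$ we simply add it in), and — assuming for simplicity that the ratios are distinct, ties being handled by a fixed tie‑breaking rule — order $S$ as $1,2,\dots,\ell$ with $\pow{i,1} < \pow{i,2} < \cdots < \pow{i,\ell}$. The first step is to observe that under {\pro} the jobs complete in exactly this order: while two jobs $a,b$ are both uncompleted they always receive processor shares in the fixed ratio $w_a:w_b$, so the amounts of work done on them are always in that ratio; hence $\pow{i,a}\le\pow{i,b}$ implies $a$ reaches its requirement $\p{i,a}$ no later than $b$ reaches $\p{i,b}$, and at the instant $c_a$ that $a$ completes, job $b$ has received exactly $(w_b/w_a)\p{i,a} = w_b\pow{i,a}$ units of processing.

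Next I would compute the completion time $c_j := \cps_j$ by telescoping over the intervals between consecutive completions. Set $c_0 = 0$ and read $\pow{i,0} := 0$ as a dummy. By the previous step, at time $c_{j-1}$ job $j$ has already received $w_j\pow{i,j-1}$ units, so it needs $w_j(\pow{i,j}-\pow{i,j-1})$ more; throughout the interval $(c_{j-1},c_j)$ the uncompleted jobs are exactly $j,j+1,\dots,\ell$, of total weight $W_j := \sum_{k \ge j} w_k$, so job $j$ receives the constant share $w_j/W_j$ and the interval has length $(\pow{i,j}-\pow{i,j-1})W_j$. Therefore
\[ c_j \;=\; \sum_{r=1}^{j}(\pow{i,r}-\pow{i,r-1})\,W_r \;=\; \sum_{r=1}^{j}(\pow{i,r}-\pow{i,r-1})\sum_{k=r}^{\ell} w_k. \]
Exchanging the order of summation, each $w_k$ collects $\sum_{r=1}^{\min\{j,k\}}(\pow{i,r}-\pow{i,r-1})$, which telescopes to $\pow{i,\min\{j,k\}} = \min\{\pow{i,j},\pow{i,k}\}$ (the last equality since the jobs are ordered by ratio), so that $c_j = \sum_{k\in S} w_k\min\{\pow{i,j},\pow{i,k}\}$.

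It then remains only to rearrange. Multiplying by $w_j$ and pulling out the $k=j$ term, which contributes $w_j^2\pow{i,j} = w_j\p{i,j}$, yields exactly \eqref{for:cost} (with $\St_i\setminus\{j\}$ in place of $S\setminus\{j\}$, which coincide whether or not $j\in\St_i$). For the first displayed formula, I would split the remaining sum over $k\ne j$ according to whether $\pow{i,k}\le\pow{i,j}$, in which case $w_jw_k\min\{\pow{i,k},\pow{i,j}\} = w_jw_k\pow{i,k} = w_j\p{i,k}$, or $\pow{i,k}>\pow{i,j}$, in which case it equals $w_jw_k\pow{i,j} = w_k\p{i,j}$. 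I do not anticipate any real difficulty here; the only points needing care are the bookkeeping around equal ratios (dispatched by a fixed tie‑break, and harmless since the target identity is symmetric in jobs of equal ratio) and the remark, implicit in the phase description used above, that the proportional‑sharing dynamics are well defined in the first place.
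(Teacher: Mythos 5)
Your proof is correct, and it reaches the lemma by a genuinely different (if closely related) route from the paper's. The paper's argument is a work-conservation one: since the machine under \pro\ always runs at full capacity, the completion time $\cps_j$ equals the total amount of work the machine has processed by that instant, and one then tallies that work job by job --- each $k$ with $\pow{i,k}\le\pow{i,j}$ has been fully served ($\p{i,k}$ units), each $k$ with $\pow{i,k}>\pow{i,j}$ has, by proportionality while both are alive, received exactly $w_k\p{i,j}/w_j$ units, and $j$ itself has received $\p{i,j}$. This yields the first displayed line in three sentences with no need to order the jobs or track the schedule over time. You instead compute $\cps_j$ explicitly as a telescoping sum of the lengths of the intervals between consecutive completions and then exchange the order of summation, arriving directly at the clean identity $\cps_j=\sum_{k}w_k\min\{\pow{i,j},\pow{i,k}\}$ (i.e., the second displayed line divided by $w_j$), from which the first line follows by splitting cases. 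Your version is longer and more mechanical, but it makes explicit two facts the paper leaves implicit --- that jobs complete in order of $\pow{i,j}$ and that the dynamics are well defined --- and it produces the interval-length formula $(\pow{i,r}-\pow{i,r-1})W_r$ as a byproduct. Both handle the ``whether currently assigned there or not'' clause the same way, by working with $\St_i\cup\{j\}$. The only cosmetic blemish is your reuse of the symbol $j$ both for the original job and for its rank in the reordered list; a distinct index for the rank would avoid any ambiguity.
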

\begin{proof}%[Proof of Lemma~\ref{lem:proc}]
%In order to prove the first part of the lemma, we consider some processing time interval of length $L$, during which no job is completed and therefore the sum of the weights of the jobs being processed is fixed and equal to $W$. During this time interval, every player $j$ whose job is being processed gets a share of the processing time equal to $\frac{w_j}{W}L$ and therefore it gets a fraction $\frac{w_j}{\p{i,j} W}L$ of its whole job completed. It is easy to notice that jobs with smaller $\frac{\p{i,j}}{w_j}$ value complete a greater fraction of their job during any such interval of time and therefore these jobs must be completed first.

%To prove the second part of the lemma, 
We notice that the completion time of job $j$ is only affected by the amount of ``work'' that the processor has completed by that time and not by the way this processing time has been shared among the jobs. For job $j$ and any job $k$ such that $\pow{i,k} \leq \pow{i,j}$, we know that their whole processing demands, $\p{i,j}$ and $\p{i,k}$ respectively, have been served. On the other hand, while job $j$ is not complete, for each $w_j$ units of processing time it receives, any job $k$ with $\pow{i,k} > \pow{i,j}$ receives $w_k$ units. Thus, when job $j$ is completed, the processing time spent on any such job $k$ will be exactly $\frac{p_{ij} w_k}{w_j}$. Adding all these processing times and multiplying by player $j$'s weight, $w_j$ gives the lemma.
\end{proof}

\begin{theorem}\label{thm:pro}
	The \poa\ of {\pro} for unrelated machines \textup{(\unrels)} is at most $\phi+1=\frac{3+\sqrt{5}}{2}\approx 2.618$.
Moreover, this bound is tight even for the restricted related machines model.
\end{theorem}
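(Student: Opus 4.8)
The plan is to run the argument of Theorem~\ref{thm:smi} almost verbatim, using the same map $\sig$. The one place real insight is needed is the preliminary identity: the ``internalized externality'' charge built into \pro\ makes the slack identity of Lemma~\ref{lem:smithcost} collapse to
\[ \Cps(\stv) \;=\; \ip{\sig(\stv),\sig(\stv)} \;=\; \norm{\sig(\stv)}^2 \qquad \text{for every configuration } \stv. \]
To verify this, sum the second expression of Lemma~\ref{lem:proc} over all jobs $j$, grouping by the machine $j$ lies on. The term $\sum_{i}\sum_{j\in\St_i}\sum_{k\in\St_i\setminus\{j\}} w_jw_k\min\{\pow{i,j},\pow{i,k}\}$ equals the full double sum $\sum_i\sum_{j\in\St_i}\sum_{k\in\St_i} w_jw_k\min\{\pow{i,j},\pow{i,k}\} = \ip{\sig(\stv),\sig(\stv)}$ (computed exactly as in the proof of Lemma~\ref{lem:smithcost}) less its diagonal $\sum_{i}\sum_{j\in\St_i} w_j^2\pow{i,j} = \niggly{x}$, while the residual term $\sum_{i}\sum_{j\in\St_i} w_j\p{i,j}$ is again exactly $\niggly{x}$; the two copies of $\niggly{x}$ cancel. (In passing, since $\norm{\sig(\stv)}^2 \ge \niggly{x}$, this re-derives the footnoted fact that \pro\ never decreases the social cost below that of \smi.)

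Second, I would extract the equilibrium inequality. Let $\stv$ be a Nash equilibrium and $\stv*$ arbitrary, and set $\vec f = \sig(\stv)$, $\vec{f^*} = \sig(\stv*)$. Applying Lemma~\ref{lem:proc} on the machine $i := x^*_j$ to evaluate the weighted completion time job $j$ would incur there, the Nash condition gives
\[ w_j\cps_j \;\le\; \sum_{k\in\St_i} w_jw_k\min\{\pow{i,k},\pow{i,j}\} \;+\; w_j\p{i,j}, \]
where one checks the cases $x^*_j=x_j$ and $x^*_j\ne x_j$ separately (in the former the extra $k=j$ term only weakens the bound; in the latter $j\notin\St_i$, so dropping the exclusion of $j$ changes nothing). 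Summing over $j$, grouping by $i=x^*_j$, and using $\min\{\pow{i,k},\pow{i,j}\}=\int_0^\infty \ind_{\pow{i,k}\ge y}\ind_{\pow{i,j}\ge y}\,dy$ exactly as in Theorem~\ref{thm:smi}, the double sum becomes $\sum_i\int_0^\infty f^*_i(y)f_i(y)\,dy = \ip{\vec{f^*},\vec f}$ and the remainder becomes $\niggly{x^*}$, so
\[ \Cps(\stv) \;\le\; \ip{\vec{f^*},\vec f} + \niggly{x^*}. \]

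Third, I would close the loop. By Cauchy--Schwarz $\ip{\vec{f^*},\vec f}\le\norm{\vec{f^*}}\norm{\vec f}$, and retaining only diagonal terms gives $\niggly{x^*}\le\norm{\vec{f^*}}^2$. Substituting these together with $\Cps(\stv)=\norm{\vec f}^2$ into the previous display yields $\norm{\vec f}^2 \le \norm{\vec{f^*}}\norm{\vec f} + \norm{\vec{f^*}}^2$; writing $t = \norm{\vec f}/\norm{\vec{f^*}}$ (the degenerate case $\Cps(\stv*)=0$ being trivial) this is $t^2\le t+1$, hence $t\le\phi=\tfrac{1+\sqrt5}{2}$, and therefore $\Cps(\stv)=\norm{\vec f}^2 \le \phi^2\norm{\vec{f^*}}^2 = (\phi+1)\,\Cps(\stv*)$. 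After the opening identity this step is routine; the only fiddly part of the whole argument is the case bookkeeping in the Nash inequality, and the conceptual content is entirely in spotting that the two $\niggly{x}$ terms cancel. Indeed, the improvement from $4$ to $\phi+1$ traces directly to the fact that \pro's cost is exactly $\norm{\sig(\stv)}^2$, whereas \smi's is $\tfrac12\norm{\sig(\stv)}^2 + \tfrac12\niggly{x}$, so the governing quadratic is the tighter $t^2\le t+1$.

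For tightness, I would exhibit a family of restricted related machine instances on which the Nash-to-optimum ratio tends to $\phi+1$; I expect a chain of machines with geometrically decreasing speeds, the ratios calibrated by powers of $\phi$, so that in the bad equilibrium each job stacks onto a slow machine the optimum would have avoided, making the Cauchy--Schwarz step and the bound $\niggly{x^*}\le\norm{\vec{f^*}}^2$ asymptotically tight simultaneously. I would relegate this construction to an appendix, as the paper does for its other lower bounds.
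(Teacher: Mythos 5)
Your first two steps coincide with the paper's: the identity $\Cps(\stv)=\norm{\sig(\stv)}^2$ (via the cancellation of the two $\niggly{x}$ terms in Lemma~\ref{lem:proc}) and the Nash inequality $\Cps(\stv)\le\ip{\sig(\stv),\sig(\stv*)}+\niggly{\stv*}$ are exactly what the paper establishes. The gap is in your closing step: by bounding $\niggly{\stv*}\le\norm{\vec{f^*}}^2$ and solving $t^2\le t+1$, you prove $\Cps(\stv)\le(\phi+1)\,\Cps(\stv*)$. That inequality is true, but it is not the theorem. The \poa\ here is measured against the optimum social cost of the underlying scheduling problem, which is $\Csmith(\stv*)$ (Smith's Rule being optimal for a fixed assignment), not $\Cps(\stv*)$. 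Since $\Cps(\stv*)=\norm{\vec{f^*}}^2=2\Csmith(\stv*)-\niggly{\stv*}$ can be arbitrarily close to $2\Csmith(\stv*)$ when $\niggly{\stv*}$ is negligible, your bound only yields $\Cps(\stv)\le 2(\phi+1)\Csmith(\stv*)\approx 5.24\,\Csmith(\stv*)$ in the worst case.

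The paper closes the argument differently so as to exploit the exact tradeoff between $\norm{\vec{f^*}}^2$ and $\niggly{\stv*}$: apply the weighted inequality $ab\le\alpha a^2+\tfrac{1}{4\alpha}b^2$ to Cauchy--Schwarz, substitute $\norm{\vec{f^*}}^2=2\Csmith(\stv*)-\niggly{\stv*}$ and $\norm{\vec f}^2=\Cps(\stv)$, so that the $\niggly{\stv*}$ terms collect into $(1-\alpha)\niggly{\stv*}\le(1-\alpha)\Csmith(\stv*)$, giving $\Cps(\stv)\le(1+\alpha)\Csmith(\stv*)+\tfrac{1}{4\alpha}\Cps(\stv)$; optimizing $\alpha=(1+\sqrt5)/4$ then yields the ratio $(3+\sqrt5)/2$. (Your uniform bounds $\norm{\vec{f^*}}^2\le 2\Csmith(\stv*)$ and $\niggly{\stv*}\le\Csmith(\stv*)$ cannot both be tight, which is why treating them jointly is essential.) For tightness the paper does not construct a new instance but invokes the load-balancing lower bound of Caragiannis et al., which in fact shows the stronger statement that even the ratio $\Cps(\stv)/\Cps(\stv*)$ can approach $\phi+1$; your sketched geometric-chain construction is not carried far enough to count as a proof of this half of the statement.
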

\begin{proof}%[Proof of Theorem~\ref{thm:pro}]
	By Lemma~\ref{lem:proc}, we see that for any assignment $\stv$, $\Cps(\stv) = \norm{\sig(\stv)}^2$; % + \tfrac12\niggly{\stv}$;
note the factor two difference compared to the first term for \smi.
Moreover, \eqref{for:cost} is upper bounded by
\[ \sum_{k \in \St_i} w_j w_k\min\{\pow{i,j}, \pow{i,k}\} \,+\, w_j \p{i,j}, \]
and so the Nash condition implies that for any equilibrium $\stv$, and any other assignment $\stv*$,
\begin{align*}
 \Cps(\stv) &\leq \sum_j\left(\sum_{k: \st_k = \st_j^*} w_jw_k \min\{\pow{\st_j^*,j}, \pow{\st_j^*,k} \} \,+\, \p{\st_j^*,j}\right)\\
&= \ip{\sig(\stv), \sig(\stv*)} + \niggly{\stv*}.
\end{align*}
This is identical to the equation we used (after an additional inequality) in the case of Smith's Rule.

Let $\stv$ be a Nash assignment, $\stv*$ an optimal assignment w.r.t.\ Smith's Rule, and again define $\vec{f} = \sig(\stv)$, $\vec{f^*} = \sig(\stv*)$.
Following the same method of analysis as for Smith's Rule, we obtain
\begin{align*}
    \Cps(\stv) &\leq \norm{\vec{f}}\norm{\vec{f^*}} \,+\, \niggly{\stv*}\\
    &\leq \alpha\norm{\vec{f^*}}^2 + \tfrac{1}{4\alpha}\norm{\vec{f}}^2 + \niggly{\stv*}\\
    &\leq 2\alpha\Csmith(\stv*) + \tfrac{1}{4\alpha}\Cps(\stv) + (1-\alpha)\niggly{\stv*}\\
	&\leq (1+\alpha)\Csmith(\stv*) + \tfrac{1}{4\alpha}\Cps(\stv),
\end{align*}
using that $\niggly{\stv*} \leq \Csmith(\stv*)$.
Setting $\alpha=(1+\sqrt{5})/4$ yields ${\Cps(\stv)}/{\Csmith(\stv*)} \leq \frac{3+\sqrt{5}}{2}$.

The tightness of this bound follows from a construction in \cite{CFKKM06}, where in fact they show that even if $\Cps$ is used for the cost of \OPT, i.e., we consider the ratio $\Cps(\stv)/\Cps(\stv*)$, this can be arbitrarily close to $1 + \phi$.
\end{proof}
%\begin{proof}[Proof sketch]
%	By Lemma~\ref{lem:proc}, we see that for any assignment $\stv$, $\Cps(\stv) = \norm{\sig(\stv)}^2$; % + \tfrac12\niggly{\stv}$;
%note the factor two difference compared to the first term for \smi.
%Moreover, \eqref{for:cost} is upper bounded by 
%\begin{equation*}
%\sum_{k \in \St_i} w_jw_k\min\{\pow{i,j}, \pow{i,k}\} \,+\, w_j\p{i,j},
%\end{equation*}
%and so the Nash condition implies that for any equilibrium $\stv$, and any other assignment $\stv*$,
%\begin{align*}
% \Cps(\stv) &\leq \sum_j\left(\sum_{k: \st_k = \st_j^*} w_jw_k \min\{\pow{\st_j^*,j}, \pow{\st_j^*,k} \} \,+\, w_j \p{\st_j^*,j}\right)\\
% &= \ip{\sig(\stv), \sig(\stv*)} + \niggly{\stv*}.
%\end{align*}
%This is identical to the equation we used (after an additional inequality) in the case of Smith's Rule.
%The proof then continues along similar lines; moreover
%tightness follows from a construction of~\cite{CFKKM06}.
%The full proof may be found in Appendix~\ref{appendix:better}.
%\end{proof}

In the case of equal weights, we obtain the following slightly improved bound. %, proven in Appendix~\ref{appendix:better}:
This result can be proven in our framework but using a variation of the 
Cauchy-Schwartz inequality derived from Lemma~\ref{lem:ineq} below. However, we present here a different proof approach of independent interest.
\begin{theorem}\label{thm:equ}
	The \poa\ of {\equ} for unrelated machines \textup{($R|~|\sum c_j$)} is at most $2.5$. This bound is tight even for the restricted related machines model.
\end{theorem}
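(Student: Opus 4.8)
The plan is to run the first half of the argument exactly as in the proofs of Theorems~\ref{thm:smi} and~\ref{thm:pro}, and only then to work harder on the last step. Specialising Lemma~\ref{lem:proc} to unit weights, a job $j$ that joins a machine $i$ whose current job set is $\St_i$ incurs $c_j^{ES} = \p{i,j} + \sum_{k \in \St_i}\min\{\p{i,j},\p{i,k}\}$, and summing over all jobs on all machines gives $C^{ES}(\stv) = \norm{\sig(\stv)}^2$ (the factor $2$ relative to \smi, as for \pro). Let $\stv$ be an \equ-equilibrium and $\stv*$ an optimal assignment for $\sum c_j$, so that $\shf$ run on $\stv*$ realises the optimum and, by Lemma~\ref{lem:smithcost}, this optimum equals $\Csmith(\stv*) = \tfrac12\norm{\sig(\stv*)}^2 + \tfrac12\niggly{x^*}$, with $\niggly{x^*} \le \Csmith(\stv*)$. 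Writing $\vec{f} = \sig(\stv)$, $\vec{f^*} = \sig(\stv*)$ and applying the Nash condition to the deviation of each job $j$ to its optimal machine $x^*_j$, exactly as before, yields
\[ C^{ES}(\stv)\ \le\ \ip{\vec{f^*},\vec{f}} + \niggly{x^*}. \]

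The point is that the routine continuation — Cauchy--Schwartz $\ip{\vec{f^*},\vec{f}} \le \norm{\vec{f^*}}\norm{\vec{f}}$ followed by $ab \le \alpha a^2 + \tfrac{1}{4\alpha}b^2$ and $\niggly{x^*} \le \Csmith(\stv*)$ — only reproduces the bound $1+\phi \approx 2.618$, even in this unweighted special case, so genuinely new input is needed to reach $\tfrac52$. The improvement exploits that every coordinate $f_i, f^*_i$ is a nonnegative nonincreasing step function, for which $\ip{f^*_i,f_i} \le \norm{f^*_i}\norm{f_i}$ is slack unless both functions are essentially indicators of a common interval — which is precisely the regime in which $\niggly{x^*}$ is already close to $\Csmith(\stv*)$. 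One route is to replace Cauchy--Schwartz by the sharpened inequality of Lemma~\ref{lem:ineq} (valid for such step functions) and feed it through the same chain; the extra slack cancels against the gap $\Csmith(\stv*) - \niggly{x^*}$ and drives the constant down to $\tfrac52$. The route I would actually present, because it is elementary and self-contained, is a direct local-move argument: when job $j$ deviates onto machine $i = x^*_j$ it enters as a new job, so its cost there is its $\shf$ completion time on $i$ plus the total delay it inflicts on the jobs already present; summing over $j$, one bounds the $\shf$-completion-time part against $\Csmith(\stv*)$ and then, working machine-by-machine of $\stv*$, charges most of the delay part to $\Csmith(\stv*)$ as well rather than dumping all of it onto $C^{ES}(\stv)$, which rearranges to $C^{ES}(\stv) \le \tfrac52\,\Csmith(\stv*)$.

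The main obstacle is exactly this last refinement: everything up to the displayed inequality mirrors the earlier proofs and is routine given Lemmas~\ref{lem:smithcost} and~\ref{lem:proc}, but the inner-product argument is off by about $0.12$, and closing that gap forces the use of a property special to equal-weight processor sharing — monotonicity of the step functions in the Lemma~\ref{lem:ineq} proof, or the decomposition of an inserted job's cost into a $\shf$ part plus an externality part in the local-move proof. For the matching lower bound one exhibits a family of restricted related instances on which \equ\ admits an equilibrium whose cost tends to $\tfrac52$ times the optimum: instances where the optimum spreads the jobs out evenly, while at equilibrium a long job is forced to coexist with many short jobs and the processor-sharing delays inflate the social cost by the full factor in the limit, in the spirit of the lower-bound constructions for the other policies and of~\cite{CQ10,CFKKM06}.
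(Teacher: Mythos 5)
Your first half is correct and matches what the paper itself says is possible (it remarks that the $2.5$ bound ``can be proven in our framework but using a variation of the Cauchy--Schwartz inequality''), and your diagnosis that plain Cauchy--Schwartz plus $ab\le \alpha a^2+b^2/(4\alpha)$ cannot beat $1+\phi$ is also correct. But the proof has a genuine gap exactly at the point you flag as ``the main obstacle'': you never actually supply the sharpened inequality, nor the mechanism that makes it apply to arbitrary processing times. The paper's argument rests on two concrete ingredients, both absent from your proposal. First, the elementary integer inequality $k^*(k+1)\leq \tfrac{1}{3}k^2+\tfrac{5}{3}\cdot\tfrac{k^*(k^*+1)}{2}$ (Lemma~\ref{lem:ineq}), where $k$ and $k^*$ count jobs on a machine in $\stv$ and $\stv*$; this is where the constants $\tfrac13$ and $\tfrac53$, and hence the factor $\tfrac{5/3}{1-1/3}=\tfrac52$, actually come from. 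Second, and crucially, this inequality only applies directly to \emph{unit-size} jobs on a machine; for general processing times one needs the reduction showing that the per-machine inequality $\sum_{j\in \St*_i}c_j^{ES}(\stv_{-j},\st*_j)\leq \tfrac13\sum_{j\in\St_i}c_j^{ES}(\stv)+\tfrac53\sum_{j\in\St*_i}c_j^{SF}(\stv*)$ only tightens as one levels the processing times (the paper decreases all maximal $\p{i,j}$ by $\Delta$ and compares the drops of the two sides via the integer inequality). Your claim that the sharpened inequality is ``valid for such step functions'' asserts precisely this reduction without proving it, and your alternative local-move route (``charges most of the delay part to $\Csmith(\stv*)$'') likewise names the charging scheme without exhibiting it. Since everything up to your displayed inequality reproduces only the $2.618$ bound, the entire content of the theorem lives in the step you have left unproved.

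On the lower bound, the paper simply invokes Theorem 3 of \cite{CFKKM06} (a load-balancing construction that is equivalent to unit jobs under \equ), so a citation suffices there; your sketch of ``a long job coexisting with many short jobs'' does not match that construction and, as written, would not obviously produce a ratio tending to exactly $\tfrac52$.
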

\begin{proof}%[Proof of Theorem~\ref{thm:equ}]
We begin by proving the following lemma, which gives a tighter version of an inequality initially used by Christodoulou and Koutsoupias~\cite{CK05}:
\begin{lemma}\label{lem:ineq}
For every pair of non-negative integers $k$ and $k^*$,
\begin{equation*}
k^*(k+1)\leq \frac{1}{3}k^2 + \frac{5}{3}\frac{k^*(k^*+1)}{2}.
\end{equation*}
\end{lemma}
\begin{proof}
After some algebra, this translates to showing that for all non-negative integers $k$ and $k^*$,
\begin{equation*}
5k^{*2}+2k^2-6k^*k-k^* \geq 0.
\end{equation*}
We start by taking the partial derivative of the LHS w.r.t.\ $k$, i.e.\ $4k-6k^*$, from which we infer that for any given value of $k^*$, the LHS is minimized when $k=\frac{3}{2}k^*$. On substituting this into our inequality, we obtain:
\begin{equation*}
5k^{*2}+2(\frac{3}{2}k^*)^2-6k^*\frac{3}{2}k^*-k^* \geq 0 \Rightarrow k^{*2}\geq 2k^*,
\end{equation*}
which is true for $k^*=0$ and $k^*\geq 2$. For $k^*=1$ our inequality becomes $k^2-3k+2\geq 0$ which is true for all non-negative integers $k$.
\end{proof}
%
%\changed{Following an argument very similar with the one for the proof of Theorem \ref{thm:smi}, we prove that for any machine $j$:}
%TODO maybe a sentence here

Now, using this lemma, we show that for any machine $i$:
\begin{equation*}
\sum_{j\in \St*_i}{c_j^{ES}(\stv_{-j},\st*_j)} \leq \frac{1}{3} \sum_{j\in
\St_i}{c_j^{ES}(\stv)}+\frac{5}{3}\sum_{j\in \St*_i}{c_j^{SF}(\stv*)}.
\end{equation*}
In order to show this, we first prove that this inequality only becomes tighter if for any
two jobs $j,j' \in \St_i \cup \St*_i$, their processing times on $i$ are equal, i.e.\
$\p{i,j}=\p{i,j'}$. Assume that not all processing times are equal and let
$\mathit{Max}_i=\{j\in \St_i\cup \St*_i~|~\forall j'\in \St_i \cup \St*_i,~\p{i,j}\geq
\p{i,j'}\}$ be the set of jobs of maximum processing time among the two sets. Also, let
$k^*=| \mathit{Max}_i \cap \St*_i|$ and $k=| \mathit{Max}_i \cap \St_i|$ be the number of
maximum size jobs in sets $\St*_i$ and $\St_i$ respectively.

For all the jobs $j\in \mathit{Max}_i$, we decrease $\p{i,j}$ by the minimum positive value
$\Delta$ such that the cardinality of $\mathit{Max_i}$ increases. After a change of this sort, the LHS drops by $(k^*(k+1))\Delta$ while the RHS drops by
$(\frac{1}{3}k^2+\frac{5}{3}\frac{k^*(k^*+1)}{2})\Delta$. Given Lemma \ref{lem:ineq} above, we conclude that the drop of the LHS is always less than or equal to the drop of the RHS. Using the same
inequality again, we conclude that for unit jobs on machine $i$ the inequality is always true; summing up over all $i\in \Macs$ yields:
\begin{equation*}
\sum_{j\in \Jobs}{c_j^{ES}(\stv_{-j},\st*_j)} \leq \frac{1}{3} \sum_{j\in \Jobs}{c_j^{ES}(\stv)}+\frac{5}{3}\sum_{j\in \Jobs}{c_j^{SF}(\stv*)}.
\end{equation*}
%This shows that this coordination mechanism is $(5/3,1/3)$-smooth, therefore showing an upper bound of $2.5$ for its robust PoA and proving the theorem.
This gives a \poa\ bound of $2.5$.

The tightness of the bound follows from Theorem 3 of \cite{CFKKM06}. The authors present a load balancing game lower bound, which is equivalent to assuming that all jobs have unit size and the machines are using {\equ}; thus the same proof yields a (pure) PoA lower bound for restricted related machines and unweighted jobs.
\end{proof}

%
%There exists a game in which a set of \emph{related but restricted (!)} machines all use the {\pro} policy and $\frac{C^{PS}(s)}{C^{PS}(\stv*)}\geq 1+\phi -\epsilon$, where $s$ is a pure Nash equilibrium and $\stv*$ is the optimum configuration w.r.t.\ the {\pro} policy.}\\\\
%

%The following lowerbound applies to all (prompt) deterministic policies.
On the negative side, we have the following (the proof of which can be found in Appendix~\ref{appendix:lowerbound})
\begin{proposition}\label{prop:lbdet}
	When jobs are anonymous, the worst-case PoA of any deterministic prompt coordination mechanism is at least $13/6$.
\end{proposition}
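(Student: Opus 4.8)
The plan is to exhibit a family of instances with anonymous jobs on which any deterministic prompt coordination mechanism is forced into a bad Nash equilibrium. Since the jobs are anonymous, on any machine the policy must treat identical jobs identically — this is the lever we exploit. I would use the restricted identical machines model with unit weights, so that a ``job type'' is just a processing time, and all jobs of a given processing time assigned to a machine are indistinguishable to the policy. The first step is to fix a small gadget: a machine $i$ together with a multiset of identical unit-size jobs, and argue that a prompt policy, run on $k$ identical unit jobs, must produce completion times whose multiset is $\{1,2,\dots,k\}$ (promptness forces the machine to be busy on $[0,k]$ releasing one job at each integer time, possibly after time-multiplexing, but even with sharing the sum of completion times is at least $1+2+\cdots+k$, and the only way to match the lower bound that $\OPT$ achieves elsewhere is the consecutive schedule). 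The key symmetry observation is that with anonymous identical jobs the policy cannot favor ``the job that wants to leave'' — so in expectation (or, for a deterministic policy, on average over the identical jobs) each of the $k$ jobs incurs cost $(k+1)/2$.

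The second step is the standard coordination-mechanism lower bound construction: build an instance where the social optimum spreads jobs thinly (each machine gets one job, cost $1$ each), but there is a Nash equilibrium in which many jobs pile onto a few machines. Concretely I would take a bipartite structure of machines and jobs in which each job has two admissible machines, arranged so that in the bad configuration $\stv$ some machine carries $k$ jobs, each of which, by the averaging argument above, has disutility $(k+1)/2$; the Nash condition requires that moving to the alternative machine — which already carries $k'$ jobs in $\stv$ — would give disutility at least $(k+1)/2$, i.e.\ the newcomer would land in position at least $(k+1)/2$ among the $k'+1$ jobs there. Choosing the incidence structure (a suitable ``layered'' or recursive bipartite graph, as in the classical $\Omega$-type load-balancing lower bounds of \cite{CFKKM06,CK05}) so that these constraints are mutually satisfiable while $C(\stv*) = n$ and $C(\stv) = \sum_i \binom{k_i+1}{2}$, and then optimizing the layer sizes, should yield the ratio $13/6$. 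I would set up the layers so the equilibrium has, say, levels of sizes tuned to push the ratio up to exactly $13/6$, checking that the deviation inequality (position-in-new-machine $\geq$ current cost) holds tightly between consecutive layers.

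The third step is to discharge the one subtlety that separates ``deterministic'' from the general prompt case: a deterministic policy on $k$ identical jobs need not give every job cost $(k+1)/2$; it only gives costs summing to at least $\binom{k+1}{2}$. But for the Nash argument I only need that at least one job on the overloaded machine has cost $\geq (k+1)/2$ — no, more carefully, I need the Nash condition to fail to be escapable, so I should instead argue: pick any job on the overloaded machine; by promptness and anonymity its completion time equals its (1-indexed) position in the release order, and since all $k$ jobs are identical there is in fact a relabeling symmetry forcing the policy's behavior to be symmetric, so every such job has the same cost $(k+1)/2$ when $k$ identical jobs share a machine. This is where I'd be careful: a deterministic policy can break ties (e.g.\ by internal tie-breaking on arrival), so strictly speaking the costs are a permutation of $\{1,\dots,k\}$, and I should pick, for each overloaded machine, the specific job occupying the ``right'' slot, or else average. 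I expect the main obstacle to be exactly this bookkeeping — engineering the bipartite incidence structure so that, simultaneously across all machines, the Nash inequalities using the (possibly asymmetric) per-job costs still hold and the aggregate ratio reaches $13/6$ rather than some weaker constant. The cleanest route is probably to make every job have two admissible machines that are ``twins'' carrying the same load in $\stv$, restoring enough symmetry that the $(k+1)/2$ average becomes a per-job guarantee; then the construction reduces to the unweighted \equ\ lower bound instance already cited from \cite{CFKKM06}, adapted so that promptness (not the specific \equ\ rule) is all that is used.
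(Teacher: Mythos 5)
Your approach has a genuine gap at its core, and it is quite different from what the paper does. You build the lower bound out of machines carrying $k$ \emph{identical unit} jobs and want each such job to incur cost roughly $(k+1)/2$ (or, as in the \equ\ construction of~\cite{CFKKM06}, cost $k$). But promptness does not force this: \shf\ is itself a deterministic, anonymous, prompt policy, and on $k$ identical unit jobs it produces the completion-time multiset $\{1,2,\dots,k\}$, with total cost $\binom{k+1}{2}$ --- exactly the Smith's Rule cost. So the $2.5$-type \equ\ lower bound does \emph{not} transfer; the adversarial policy simply refuses to equalize. Worse, the Nash condition itself cannot be verified in your construction: the job occupying the last slot on an overloaded machine pays $k$, and to keep it from deviating you must guarantee it would land in a late slot on its alternative machine. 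With identical anonymous jobs the policy's assignment of slots to jobs is arbitrary (indeed not even well defined), so you cannot pin down the deviator's cost, and averaging over the identical jobs establishes nothing about any individual job's incentive. Your proposed fix via ``twin'' machines restores symmetry of loads but not of per-job costs, which is what the equilibrium argument needs.

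The paper avoids all of this by working with jobs of \emph{distinct, geometrically varying} sizes and a sparse game graph: a binary tree of depth $\ell$ with a length-$\ell$ path hung off each leaf, where a machine at depth $i$ in the tree carries jobs of size $(3/2)^{\ell-i}$ and the path machines carry sizes $(1/2)^k$; each job has only two admissible machines (its arc's endpoints), each machine carries at most two jobs at the proposed equilibrium, and a slight perturbation of the processing times pins down how any prompt policy must treat each job, so the Nash condition can be checked against \emph{every} prompt anonymous policy simultaneously. The ratio $13/6$ then comes out of summing the two geometric series as $\ell \to \infty$, not from optimizing layer sizes in a many-jobs-per-machine pileup. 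If you want to salvage your plan, the essential missing idea is precisely this use of unequal job sizes (so that promptness alone bounds each completion time from below) together with bounded occupancy per machine; with identical unit jobs the statement you need in step one is false.
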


\subsection{Randomized Coordination Mechanism}\label{sec:rand}

In this section we examine the power of randomization and present \rand,
which outperforms any prompt deterministic strongly local policy.
Under \rand, for any pair of jobs on the same machine, the externalities
they  cause each other are shared equally in expectation. This is achieved
with the following property: if two jobs $j$ and $j'$ are assigned to machine $i$, then 
\begin{equation}\label{eq:prand}
 \Pr\{j \text{ precedes } j' \text{ in the ordering}\} = \frac{\rho_{ij'}}{\rho_{ij} + \rho_{ij'}}.
\end{equation}
Recall $\pow{i,j} = \p{i,j}/w_j$.
A distribution over orderings with this property can be constructed as follows. Starting from the set of jobs $\St_i$ assigned to machine $i\in I$, select job $j\in \St_i$ with probability $\pow{i,j}/\sum_{k\in \St_i}\pow{i,k}$, and schedule $j$ at the end. Then remove $j$ from the list of jobs, and repeat this process.
%
%We can construct this ordering explicitly as follows. \todo{I guess make a long interval from intervals of length $\rho_{ij}$ for each job $j$, and then uniformly sample points $y_1, y_2, \ldots$ from the long interval, and assign the reverse of the ordering given by the first time each job is hit by a point.}
Note that this policy is different from a simple randomized policy that orders jobs uniformly at random. 
In fact, this simpler policy is known to give an $\Omega(m)$ PoA bound for the makespan function~\cite{ILMS09}, and the same family of examples developed in~\cite{ILMS09} gives an $\Omega(m)$ lower bound for this policy in our setting. Nevertheless, we will prove the following bounds:
\begin{theorem}\label{thm:randupper}
  The price of anarchy when using the \Rand\ policy is at most $32/15 = 2.133\cdots$. 
%Moreover, in the following two cases, the bound improves to $\pi/2$, which is tight:
%\begin{enumerate}[(i)]
%\item The nonatomic case, where the processing time of any individual job is negligible.
%\item The restricted identical machines case, where for all jobs $j$ and machines $i, i'$ which $j$ can run on, $p_{ij} = p_{i'j}$.
%\end{enumerate}
Moreover, if the sum of the processing times of the jobs is negligible compared to the social cost of the optimal solution, this bound improves to $\pi/2$, which is tight.
\end{theorem}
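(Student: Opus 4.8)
The plan is to mirror the structure of the proofs of Theorems~\ref{thm:smi} and~\ref{thm:pro}, but now working in an inner product space induced by the Hilbert matrix rather than plain $L_2$. First I would compute the expected weighted completion time of a job $j$ on a machine $i$ under \rand. Using property~\eqref{eq:prand}, for any other job $k$ on $i$ the expected delay $k$ contributes to $j$ is $\Pr\{k \text{ precedes } j\}\cdot \p{i,k} = \frac{\rho_{ik}}{\rho_{ij}+\rho_{ik}}\p{i,k}$, which (after multiplying by $w_j$ and symmetrizing) should simplify to something proportional to $\randterm{i,j,k}$, i.e.\ the ``harmonic mean'' term $\frac{\rho_{ij}\rho_{ik}}{\rho_{ij}+\rho_{ik}}$ times $w_jw_k$, plus the self term $w_j\p{i,j}$. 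The key point to extract is that $\frac{1}{\rho_{ij}+\rho_{ik}} = \int_0^\infty e^{-(\rho_{ij}+\rho_{ik})t}\,dt$, or alternatively the Hilbert-matrix identity $\frac{1}{i+j-1} = \int_0^1 y^{i+j-2}\,dy$; the right choice of representation is exactly the one that lets the social cost $\Crand(\stv)$ be written as $\tfrac12\langle \psi(\stv),\psi(\stv)\rangle + \tfrac12\niggly{x}$ (or $\Crand(\stv)=\|\psi(\stv)\|^2$ up to the $\eta$ correction), for a suitable map $\psi$ into $L_2$ with an exponential (or power) kernel weight. This is the analogue of Lemma~\ref{lem:smithcost}.

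Second, with that representation in hand, I would invoke the Nash condition exactly as before: job $j$'s expected cost at equilibrium is at most its cost were it to move to $\st*_j$, and bounding the ``$k$ on the target machine'' sum by the full sum over all $k\in\St*_{\st*_j}$ gives $\Crand(\stv) \le \langle \psi(\stv),\psi(\stv*)\rangle + \niggly{\stv*}$. Then Cauchy--Schwarz in this inner product space yields $\Crand(\stv)\le \|\psi(\stv)\|\,\|\psi(\stv*)\| + \niggly{\stv*}$. The subtlety is that $\|\psi(\stv*)\|^2$ is now a cost in the \emph{Hilbert-norm}, which is \emph{not} directly $\Crand(\stv*)$ nor $\Csmith(\stv*)$; this is where the promised \textbf{norm distortion inequality} enters. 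I would prove (or cite, in the $\pi/2$ regime, the result of Chung et al.~\cite{Chung1988SelfOrganizing}) that $\|\psi(\stv*)\|^2 \le \gamma \cdot \Csmith(\stv*)$ for the appropriate constant $\gamma$; combined with the quadratic trick $ab\le \alpha a^2 + \tfrac{1}{4\alpha}b^2$ and $\niggly{\stv*}\le\Csmith(\stv*)$, optimizing over $\alpha$ produces $32/15$. In the regime where $\niggly{\stv*}$ (equivalently $\sum_j w_j\p{x_j^*j}$) is negligible, one drops the $\eta$ terms entirely, so the bound becomes purely $\|\psi(\stv)\|\|\psi(\stv*)\| \le \sqrt{C_{\max\text{-distortion}}}\cdot\sqrt{\Crand(\stv)\,\Crand(\stv*)}$ up to the distortion constant, and the sharp constant there is exactly the $\ell_2\to\ell_2$ operator norm of the relevant matrix — which for the Hilbert matrix is $\pi$ — giving a PoA of $\pi/2$.

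I expect the \textbf{main obstacle} to be establishing the correct norm distortion inequality with the exact constants ($32/15$ in general, $\pi/2$ in the limiting case), rather than the Nash-condition bookkeeping, which is by now routine given Theorems~\ref{thm:smi} and~\ref{thm:pro}. Concretely, one must show that the quadratic form associated with the kernel $\frac{\rho_{ij}\rho_{ik}}{\rho_{ij}+\rho_{ik}}$ (i.e.\ the off-diagonal expected-externality terms of \rand) is bounded by a constant times the Smith's-Rule quadratic form with kernel $\min\{\rho_{ij},\rho_{ik}\}$. Since $\frac{ab}{a+b}\ge\tfrac12\min\{a,b\}$, the cost of \rand\ is at least that of \smi\ (as expected: randomization only hurts the given assignment), so the distortion runs the ``wrong'' way and must be controlled by a spectral argument. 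The natural route is to write both forms via integral representations — $\min\{a,b\}=\int_0^\infty \ind_{a\ge y}\ind_{b\ge y}dy$ and $\frac{ab}{a+b}=\int_0^\infty ab\,e^{-(a+b)y}dy = \int_0^\infty (ae^{-ay})(be^{-by})\,dy$ — and then bound the $L_2$-norm of the map $j\mapsto (y\mapsto w_j\rho_{ij}e^{-\rho_{ij}y})$ against that of $j\mapsto (y\mapsto w_j\ind_{\rho_{ij}\ge y})$; the ratio of these norms, maximized over all nonnegative inputs, is a single universal constant computable as the operator norm of a fixed integral operator. Verifying that this constant equals $15/16$ (so that the PoA is $4\cdot\tfrac{15}{16}\cdot(\text{stuff}) \to 32/15$ after the $\alpha$-optimization) and that in the $\eta$-free limit it sharpens to $(2/\pi)^2$ via the Hilbert-matrix spectrum is the delicate computational heart of the argument; I would handle the $\pi/2$ claim by citing \cite{Chung1988SelfOrganizing} for the distortion bound (promising the shorter alternative proof elsewhere in the paper) and reserving the detailed $32/15$ computation, including the matching lower-bound construction giving $5/3$, for the body or appendix.
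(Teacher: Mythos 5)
Your architecture matches the paper's almost exactly: a signature map into a space whose inner product is induced by the matrix $M_{rs}=\frac{rs}{r+s}$ (positive definite because $M=DHD$ with $H$ a submatrix of the Hilbert matrix), the identity $\Crand(\stv)=\norm{\sig(\stv)}^2+\tfrac12\niggly{\stv}$, the Nash condition giving $\Crand(\stv)\le\ip{\sig(\stv*),\sig(\stv)}+\niggly{\stv*}$, Cauchy--Schwarz, a norm-distortion step back to $\Csmith$, and the Chung--Hajela--Seymour bound for the $\pi/2$ regime. Your integral representation $\frac{ab}{a+b}=\int_0^\infty(ae^{-ay})(be^{-by})\,dy$ is a legitimate alternative to the paper's finite-matrix/total-positivity route for establishing positive definiteness, and your plan to cite \cite{Chung1988SelfOrganizing} for the sharp $\pi/4$ constant is exactly what the paper does.

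Where you go astray is in locating the ``delicate computational heart'' of the $32/15$ bound. No spectral argument and no constant $15/16$ is involved. The only distortion fact needed in the general case is the \emph{pointwise} inequality $\frac{ab}{a+b}\le\min\{a,b\}$ (apply it term by term to the off-diagonal entries, and note the diagonal entries of $M$ are $\tfrac{r}{2}$ versus $r$ for the min-kernel), which yields the crude bound $\Crand(\stv)\le 2\Csmith(\stv)-\niggly{\stv}$ for \emph{every} assignment. The constant $32/15$ then emerges purely from the bookkeeping: split Cauchy--Schwarz as $ab\le\tfrac23a^2+\tfrac38b^2$, write $\norm{\sig(\stv*)}^2=\Crand(\stv*)-\tfrac12\niggly{\stv*}\le 2\Csmith(\stv*)-\tfrac32\niggly{\stv*}$, observe that the $\niggly{\stv*}$ terms cancel exactly ($-\tfrac23\cdot\tfrac32\niggly{\stv*}+\niggly{\stv*}=0$), and solve $\Crand(\stv)\le\tfrac43\Csmith(\stv*)+\tfrac38\Crand(\stv)$ to get $\tfrac{4/3}{5/8}=\tfrac{32}{15}$. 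Had you pursued a sharp operator-norm computation of the harmonic-mean form against the min form, you would have rediscovered the $\pi/4$ constant (which is $<15/16$ and is only needed, and only helps, in the $\eta$-free regime); it does not produce $32/15$, and your proposed factorization $4\cdot\tfrac{15}{16}\cdot(\cdot)$ does not correspond to anything in the argument. So the plan is salvageable, but as written the step you flag as the hard part would send you down the wrong computation; the actual missing ingredient is the elementary lemma $\frac{ab}{a+b}\le\min\{a,b\}$ together with the careful tracking of the $\niggly{}$ terms.
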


The high level approach for obtaining the upper bound is in exactly the same spirit as the previous section: find an appropriate mapping $\sig$ from an assignment into a convenient inner product space. 

For simplicity, we assume in this section that the processing times have been scaled such that the ratios $\pow{i,j}$ are all integral. 
This assumption is inessential and easily removed.
We also take $\Sz$ large enough so that, except for infinite processing times, $\pow{i,j}\leq \kappa$ for all $i \in I, j \in J$. 
% Again, this restriction is purely for simplicity in the exposition.

% Will be in the preliminaries
%We write $\Csmith(\stv)$ and $\Crand(\stv)$ for the cost of an assignment $\stv$ incurred when using Smith's Rule and \Rand{}, respectively.

%\paragraph*{Signatures.}
\paragraph*{An inner product space.}
The map $\sig$ we use gives the \emph{signature} for each machine: in the unweighted case, this simply describes how many jobs of each size are assigned to the machine.
\begin{definition}
Given an assignment $\stv$, its \emph{signature} $\sig(\stv) \in \Rplus^{m \times \Sz}$ is a vector indexed by a machine $i$ and a processing time over weight ratio $r$; we denote this component by $\sig(\stv)^i_r$. Its value is then defined as
\[ \sig(\stv)^i_r := \sum_{\substack{j \in \St_i\\\pow{i,j} = r}} w_j. \]
We also let $\sig(\stv)^i$ denote the vector $(\sig(\stv)^i_0, \sig(\stv)^i_1, \ldots ,\sig(\stv)^i_{\Sz}) $.
\end{definition}

Let $M$ be the $\Sz \times \Sz$ matrix given by 
\[ M_{rs} = \frac{rs}{r+s}. \]
\begin{lemma}\label{lem:randcost}
  Let $\stv$ be some assignment, and let $\vec{u} = \sig(\stv)$.
If job $j$ is assigned to machine $i$, its expected completion time is given by 
\[ \crand_j = (M \vec{u}^i)_{\pow{i,j}} + \tfrac12\p{i,j}. \]
If $j$ is not assigned to $i$, then its expected completion time upon switching to $i$ would be
\[ \crand_j = (M \vec{u}^i)_{\pow{i,j}} + \p{i,j}. \]
\end{lemma}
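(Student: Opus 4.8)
The plan is to expand the random completion time of $j$ directly and take expectations using the defining property \eqref{eq:prand} of \rand.

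First I would fix the machine $i$ and consider the (random) ordering on the jobs running on $i$ --- this is $\St_i$ if $j$ is already assigned there, and $\St_i \cup \{j\}$ otherwise. In any realization, the completion time of $j$ is $\p{i,j}$ plus the sum of $\p{i,k}$ over the jobs $k \neq j$ on the machine that precede $j$ in the ordering. By linearity of expectation,
\[ \crand_j = \p{i,j} + \sum_{k} \p{i,k}\cdot\Pr\{k \text{ precedes } j\}, \]
where $k$ ranges over the jobs $\neq j$ on the machine; and by \eqref{eq:prand} (applied with the roles of the two jobs swapped) $\Pr\{k \text{ precedes } j\} = \pow{i,j}/(\pow{i,k}+\pow{i,j})$.

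Next I would rewrite each summand using $\p{i,k} = w_k\pow{i,k}$, so that
\[ \p{i,k}\cdot\frac{\pow{i,j}}{\pow{i,k}+\pow{i,j}} = w_k\cdot\frac{\pow{i,k}\pow{i,j}}{\pow{i,k}+\pow{i,j}} = w_k\, M_{\pow{i,k},\pow{i,j}}. \]
Grouping the jobs $k$ according to the value $r=\pow{i,k}$ of their ratio, and invoking the definition $\sig(\stv)^i_r = \sum_{k\in \St_i:\,\pow{i,k}=r} w_k$ together with the symmetry $M_{rs}=M_{sr}$, the sum of $w_k M_{\pow{i,k},\pow{i,j}}$ over all $k \in \St_i$ (now including the index $k=j$ whenever $j \in \St_i$) equals $\sum_r M_{\pow{i,j},r}\,\sig(\stv)^i_r = (M\vec{u}^i)_{\pow{i,j}}$.

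Finally I would reconcile the bookkeeping of the "diagonal" term. If $j \in \St_i$, the expression $(M\vec{u}^i)_{\pow{i,j}}$ includes the contribution of $k=j$, namely $w_j M_{\pow{i,j},\pow{i,j}} = w_j\pow{i,j}/2 = \tfrac12\p{i,j}$; subtracting it and adding back the leading $\p{i,j}$ yields $\crand_j = (M\vec{u}^i)_{\pow{i,j}} + \tfrac12\p{i,j}$. If $j \notin \St_i$, the vector $\vec{u}^i$ contains no contribution from $j$, so nothing is subtracted and the leading $\p{i,j}$ survives, giving $\crand_j = (M\vec{u}^i)_{\pow{i,j}} + \p{i,j}$. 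There is no genuine obstacle in this argument; the only points needing a little care are this diagonal accounting and the harmless convention that $M_{rs}=0$ when $r$ or $s$ equals $0$ (jobs of zero processing time, which indeed delay nothing).
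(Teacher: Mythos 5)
Your proof is correct and follows essentially the same route as the paper's: expand the expected completion time by linearity over the jobs preceding $j$, apply the defining probability of \rand, and absorb the diagonal term $w_jM_{\pow{i,j}\pow{i,j}}=\tfrac12\p{i,j}$ to account for the difference between the two cases. The diagonal bookkeeping you spell out is exactly the step the paper performs when it extends the sum from $\St_i\setminus\{j\}$ to $\St_i$ and replaces $\p{i,j}$ by $\tfrac12\p{i,j}$.
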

\begin{proof}%[Proof of Lemma~\ref{lem:randcost}]
We consider case (i); (ii) is similar. So $x_j = i$.
The expected completion time of job $j$ on machine $i$ is
\begin{align*}
	\crand_j &= \sum_{k\in \St_i \setminus \{j\}} \p{i,k}\Pr\{\text{job $k$ ahead of job $j$}\} \,+\, \p{i,j}\\
	&= \sum_{k \in \St_i \setminus \{j\}} \p{i,k}\frac{\pow{i,j}}{\pow{i,j} + \pow{i,k}} \,+\, \p{i,j}\\
&= \sum_{k \in \St_i} \p{i,k}\frac{\pow{i,j}}{\pow{i,j} + \pow{i,k}} \,+\, \tfrac12\p{i,j}.
%&= \sum_{k \in \St_i} w_{ik}M_{\pow{i,j}\pow{i,k}} \,+\, \tfrac12\p{i,j}.
\intertext{We can rewrite this in terms of the signature as}
\crand_j &= \sum_s  u_s^i M_{\pow{i,j}s} \,+\, \tfrac12\p{i,j} = 
(M \vec{u}^i)_{\pow{i,j}} + \tfrac12\p{i,j}.\qedhere
\end{align*}
\end{proof}
%We thus have
%\begin{corollary}\label{cor:randcost}
%  The cost of an assignment $\stv$ is 
%\begin{equation}\label{eq:costrandlong} 
%    \Crand(\stv) = \sum_i (\sig(\stv)^i)^T M \sig(\stv)^i + \tfrac12\niggly{x}.
%\end{equation}
%\end{corollary}

A crucial observation is the following:
\begin{lemma}
The matrix $M$ is positive definite.
\end{lemma}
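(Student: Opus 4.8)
The plan is to show that the quadratic form $\vec{v}^\top M \vec{v} = \sum_{r,s} \frac{rs}{r+s} v_r v_s$ is strictly positive for every nonzero $\vec{v} \in \R^\Sz$ (with indices running over the relevant positive integer ratios). The key idea is to recognize $M$ as a Gram matrix: the entries $\frac{rs}{r+s}$ should arise as inner products of a family of functions indexed by $r$, and linear independence of those functions will give positive definiteness.

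The plan is to recognize $M$ as the Gram matrix of a linearly independent family of functions, which makes positive definiteness immediate. The key identity is that, for $r,s>0$, $\frac{1}{r+s}=\int_0^\infty e^{-(r+s)t}\,dt$, so that
\[ M_{rs} = \frac{rs}{r+s} = \int_0^\infty \bigl(r e^{-rt}\bigr)\bigl(s e^{-st}\bigr)\, dt = \langle \phi_r, \phi_s \rangle_{L^2([0,\infty))}, \]
where $\phi_r(t):=r e^{-rt}$. (Here $r$ and $s$ range over the positive integer ratios $1,\ldots,\Sz$; the ratio $0$, corresponding to a job of zero processing time, contributes nothing to any completion time and may be dropped.)

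First I would verify this identity and conclude that, for any vector $\vec{v}$ indexed by these ratios, the quadratic form satisfies
\[ \vec{v}^\top M \vec{v} \;=\; \sum_{r,s} v_r v_s \langle \phi_r, \phi_s\rangle \;=\; \Bigl\| \sum_r v_r \phi_r \Bigr\|_{L^2}^2 \;\ge\; 0, \]
so $M$ is at least positive semidefinite. For strictness, suppose $\vec{v}^\top M \vec{v}=0$. Then $\sum_r v_r \phi_r = 0$ in $L^2([0,\infty))$, and since this is a finite sum of continuous functions it vanishes identically: $\sum_r (v_r r)\,e^{-rt}=0$ for all $t\ge 0$. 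The functions $e^{-rt}$ for distinct positive $r$ are linearly independent (for instance, they are eigenfunctions of $d/dt$ with distinct eigenvalues, or one differentiates repeatedly at $t=0$ and invokes a Vandermonde determinant), so $v_r r = 0$, hence $v_r = 0$, for every $r$; thus $\vec{v}=\vec{0}$. Therefore $M$ is positive definite.

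I do not expect a genuine obstacle here: the integral representation does all the work, and it is precisely the standard argument showing the Cauchy matrix $(1/(r+s))$ is positive definite. Equivalently, one may write $M = D C D$ with $D=\operatorname{diag}(1,2,\ldots,\Sz)$ and $C_{rs}=1/(r+s)$, and deduce positive definiteness of $M$ from that of $C$. The only point needing a word of care is the indexing convention for the signature (whether the zero-ratio coordinate is formally present), which is harmless for the reason noted above.
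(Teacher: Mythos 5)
Your proof is correct, but it takes a genuinely different route from the paper's. The paper also factors $M = DHD$ with $D = \operatorname{diag}(1,\ldots,\Sz)$ and $H_{rs} = 1/(r+s)$, but then observes that $H$ is a submatrix of the infinite Hilbert matrix $\bigl(1/(r+s-1)\bigr)_{r,s\in\N}$ and invokes the cited fact that the Hilbert matrix is \emph{totally positive} (every minor is positive), from which positive definiteness of $H$, and hence of $M$, follows via Sylvester's criterion. You instead prove positive definiteness of the Cauchy kernel directly, writing $M_{rs} = \int_0^\infty (re^{-rt})(se^{-st})\,dt$ so that $M$ is the Gram matrix of the functions $\phi_r(t) = re^{-rt}$ in $L^2([0,\infty))$, and then using linear independence of distinct exponentials to get strictness. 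Your argument is self-contained and elementary, whereas the paper's is shorter but leans on an external total-positivity result; both are standard ways to handle Cauchy-type matrices. One small point: the matrix $M$ in the paper is $\Sz\times\Sz$ with indices running over $1,\ldots,\Sz$ (an index $r=0$ would make $M_{00}$ undefined), so your caveat about dropping the zero-ratio coordinate, while harmless, is not actually needed.
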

\begin{proof}
  Let $D$ be the diagonal matrix with $D_{rr} = r$. Then we have $M = DHD$, 
where the $\Sz \times \Sz$ matrix $H$ is given by $H_{rs} = \frac{1}{r+s}$. This is a submatrix of the infinite Hilbert matrix
$\left(\frac{1}{r+s-1}\right)_{r,s \in \N}$. The Hilbert matrix has the property that it is \emph{totally positive}~\cite{Choi83}, meaning that the determinant of any submatrix is positive. It follows immediately that $H$ is positive definite, and hence so is $M$.
\end{proof}

Thus we may define an inner product by
\begin{equation}\label{eq:randip}
	\ip{\vec{u}, \vec{v}} := \sum_{i \in I} (\vec{u}^i)^T M \vec{v}^i, 
\end{equation}
with an associated norm $\norm{\cdot}$. 
%Equation~\ref{eq:costrandlong} can thus be written in the convenient form
In addition, the total cost $\sum_j w_j \crand_j(\stv)$ of an assignment $\stv$ may be written in the convenient form
\begin{equation}\label{eq:costrand}
  \Crand(\stv) = \norm{\sig(\stv)}^2 + \tfrac12\niggly{x}.
\end{equation}

\paragraph*{Competitiveness of \Rand\ on a single machine.}%\label{sec:onemachine}
How well \Rand\ performs on a \emph{single} machine, compared to the optimal \smi, turns out to play an important role. 
So suppose we have $n$ jobs with size $p_j$ and weight $w_j$, for $j \leq n$. 
The signature $\vec{u}$ is given by just $u_r = \sum_{j: p_j/w_j = r} w_j$. 
Notice that the weighted sum of completion times according to \smi\ and \rand\ respectively are
\[\textstyle\vec{u}^T S \vec{u} + \frac12 \sum_j w_jp_j \qquad \text{and} \qquad \vec{u}^T M \vec{u} +  \frac12 \sum_j   w_jp_j, \]
where $S_{rs} = \tfrac12\min(r,s)$. 
The extra $\sum_j w_jp_j$ terms only help, and in fact turn out to be negligible in the worst case example; ignoring them, the goal is to determine
$\max_{\vec{u} \geq \vec{0}} \frac{\vec{u}^TM\vec{u}}{\vec{u}^T S \vec{u}}$.
So the question is closely related to the worst-case distortion between two norms.

Interestingly, it turns out that this problem has been considered, and solved, in a different context. 
In~\cite{Chung1988SelfOrganizing}, Chung, Hajela and Seymour consider the problem of \emph{self-organizing sequential search}. 
In order to prove a tight bound on the performance of the ``move-to-front'' heuristic compared to the optimal ordering, they show:
%In this problem, requests for data items $1$ through $n$ arrive, where each request is independently random, with item $i$ appearing with probability given by a fixed, but unknown, probability $p_i$. The items are stores sequentially in a list, and the cost to serve an item is given by its position in the list. 
%
%The optimal ordering is obtained by sorting items according to their request probabilities. 
%But suppose there is not enough memory available to store the $p_i$'s. 
%We can try the following simple memoryless algorithm, called the \emph{move-to-front heuristic}; when a request for item $i$ is served, it is moved to the front of the list. 
%
%It turns out that the question of how competitive this memoryless rule is compared to the optimal ordering (asymptotically, over a long sequence of requests), is identical to our problem on a single machine. Chung et al.~\cite{Chung1988SelfOrganizing} prove the following (we state their result slightly differently):
\begin{theorem}[{\cite{Chung1988SelfOrganizing}}]\label{thm:chung}
  For any sequence $u_1, u_2, \ldots, u_k$ with $u_r > 0$ for all $r$,
\begin{equation}
  \sum_{r,s} u_ru_s \frac{rs}{r + s} < \frac{\pi}{4}\sum_{r,s} u_ru_s\min\{r, s\}.
\end{equation}
\end{theorem}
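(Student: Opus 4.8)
The plan is to prove the quadratic-form inequality $\vec{u}^T M \vec{u} < \tfrac\pi4 \vec{u}^T (2S)\vec{u}$ — where $M_{rs}=rs/(r+s)$ and $(2S)_{rs}=\min\{r,s\}$ — by exhibiting a single integral representation that simultaneously diagonalizes both quadratic forms, then comparing the resulting scalar integrands pointwise. The classical identity to start from is
\[
\frac{rs}{r+s} = \int_0^\infty \bigl(1 - e^{-rt}\bigr)\bigl(1 - e^{-st}\bigr)\, e^{0\cdot t}\,\frac{\text{?}}{\text{?}},
\]
so more usefully I would write $\frac{1}{r+s} = \int_0^\infty e^{-(r+s)t}\,dt$, hence
\[
\frac{rs}{r+s} = \int_0^\infty \bigl(r e^{-rt}\bigr)\bigl(s e^{-st}\bigr)\, \frac{dt}{\text{(normalization)}},
\]
which is not quite right; the clean statement is $\frac{rs}{r+s} = \int_0^\infty (1-e^{-rt})(1-e^{-st})\, t^{-2}\,dt \cdot(\text{const})$ — I will fix the exact kernel in the writeup, but the point is that there is a kernel $K(t)$ and functions $g_r(t)$ with $\frac{rs}{r+s} = \int_0^\infty g_r(t)g_s(t)K(t)\,dt$. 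Similarly $\min\{r,s\} = \int_0^\infty \ind_{y\le r}\ind_{y \le s}\,dy$, exactly the representation already used in Lemma~\ref{lem:smithcost}.

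Given such representations, for any fixed nonnegative sequence $(u_r)$ set $G(t) = \sum_r u_r g_r(t)$ and $F(y) = \sum_{r: r \ge y} u_r$. Then $\vec{u}^T M \vec{u} = \int_0^\infty G(t)^2 K(t)\,dt$ and $\vec{u}^T(2S)\vec{u} = \int_0^\infty F(y)^2\,dy$. The inequality then reduces to showing $\int G^2 K \le \tfrac\pi4\int F^2$, and the natural route is a change of variables turning the first integral into one over $y$ as well, after which one bounds the transformed kernel by $\tfrac\pi4$ times a Dirac-type comparison — concretely, one shows the linear map $F \mapsto G\sqrt{K}$ (in the appropriate variables) has operator norm $\sqrt{\pi}/2$ on $L^2$, which is the statement that the relevant integral operator has norm $\pi/2$. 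The constant $\pi/2$ strongly suggests the operator is (a dilation-invariant one whose spectrum is computed by the Mellin transform) the Hilbert-type operator with kernel $\frac{1}{y+y'}$ on $(0,\infty)$, whose $L^2$ norm is exactly $\pi$; tracking the factor of $2$ between $M$ and $2S$ gives $\pi/4$ as the ratio of quadratic forms. Strictness of the inequality comes because the extremal "eigenfunction" of the Hilbert operator, $y^{-1/2}$, is not in $L^2(0,\infty)$ and cannot be attained by any finitely supported nonnegative sequence $(u_r)$ with all $u_r>0$.

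The main obstacle I anticipate is getting the integral representation of $\frac{rs}{r+s}$ in a form whose kernel, after the change of variables that also straightens out $\min\{r,s\}$, is visibly the Hilbert kernel $\frac{1}{y+y'}$ (up to the constant) — i.e., lining up the two representations on a common footing. Once both forms live on $L^2(0,\infty)$ and the comparison is "same vector, two operators," the bound is the known norm of the continuous Hilbert operator (computable via its Mellin symbol $\int_0^\infty \frac{s^{-1/2}}{1+s}\,ds = \pi$), and strictness follows from non-attainment as above. Since the excerpt explicitly says Chung, Hajela and Seymour proved this and that the authors "provide an alternative shorter proof," I would present exactly this operator-norm argument as the short proof, being careful that the paper only needs the \emph{strict} inequality for finitely supported positive sequences, which is the easy direction of the operator-norm computation together with the non-attainment remark.
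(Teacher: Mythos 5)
Your route is genuinely different from the paper's. The paper's proof (Appendix~B) reduces to integer multiplicities, sorts them so that $\sum_{i,j}\min\{x_i,x_j\}=2\sum_i x_i(i-\tfrac12)$, normalizes by homogeneity, and treats the left-hand side as a concave maximization under that linear constraint; the KKT conditions identify the Lagrange multiplier with the optimal value, and a single stationarity equation at a well-chosen index $i^*$ is dominated by a series evaluated by complex analysis to be below $\pi/2$. Your proposal instead realizes both quadratic forms as squared $L^2(0,\infty)$ norms and reduces the constant $\pi/4$ to the norm of a dilation-invariant integral operator computed via its Mellin symbol. That strategy is sound and, in my view, explains more transparently where $\pi/4$ comes from; it is completable into a short proof.

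However, the decisive computation is exactly the step you flag as the obstacle, and your conjectured answer for it is wrong. The representation you are looking for is the exact identity $\frac{rs}{r+s}=\int_0^\infty (re^{-rt})(se^{-st})\,dt$ (no normalization needed), so with $G(t)=\sum_r u_r re^{-rt}$ and $F(y)=\sum_{r\ge y}u_r$ one has $\vec{u}^TM\vec{u}=\int_0^\infty G(t)^2dt$ and $\sum_{r,s} u_ru_s\min\{r,s\}=\int_0^\infty F(y)^2dy$. Since $re^{-rt}=\int_0^r e^{-yt}(1-yt)\,dy$ (differentiate $ye^{-yt}$), we get $G=TF$ where $T$ has kernel $e^{-yt}(1-yt)$, and $T^*T$ has kernel $\int_0^\infty e^{-(y+y')t}(1-yt)(1-y't)\,dt=\frac{2yy'}{(y+y')^3}$. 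This is homogeneous of degree $-1$ but it is \emph{not} the Hilbert kernel $\frac{1}{y+y'}$, and no ``factor of two'' bookkeeping against an operator of norm $\pi$ is involved: the Schur/Mellin computation gives its $L^2(0,\infty)$ norm directly as $\int_0^\infty \frac{2s^{1/2}}{(1+s)^3}\,ds=2B(\tfrac32,\tfrac32)=\tfrac{\pi}{4}$. Strictness then follows as you indicate, because the extremizing profile $y^{-1/2}$ is not in $L^2$ while $F$ is a nonzero compactly supported step function, so the bound cannot be attained. With the kernel corrected and this norm computed, your argument goes through; as written, the proposal leaves that computation undone and, taken literally, identifies an operator that would not yield the stated constant.
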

Moreover, this is tight~\cite{Gonnet1981Exegesis}  (take $p_j = 1/j^2$, $w_j=1$, and let $n\rightarrow \infty$).
%A straightforward calculation shows that 
%$\sum_{i=1}^n\sum_{j=1}^n \frac{p_ip_j}{p_i+p_j} \ge \ln(n)\frac{\pi}{2}+k_1$ and that 
%$\sum_{i=1}^n p_i(i-1/2) \le \ln(n)+k_2$, for constants (independent of $n$) $k_1$ and $k_2$. Then, the ratio approaches $\pi/2$ as $n$ tends to infinity.
We also present a quite different proof of the theorem in Appendix~\ref{appendix:randsingle}.
All in all, we find that $\pi/2$ is a tight upper bound on the competitiveness of \Rand\ on a single machine.
The following lemma (which may also be cast as a norm distortion question), is much more easily demonstrated:
%First an easy lemma:
%We in fact prove only a slightly weaker bound of $2.37$ here; the proof of the improved bound (with some further technicalities) is given in Appendix~\ref{appendix:stronger}.
\begin{lemma}\label{lem:crude}
For any assignment $\stv$, we have $\Crand(\stv) \leq 2\Csmith(\stv) - \niggly{\stv}$.
%	For any machine $i$ under some assignment $\stv$, 
%	\[ \sum_{j,k \in \St_i} w_jw_k \randterm{i,j,k} \leq \sum_{j,k \in \St_i}w_jw_k\smithterm{i,j,k} - \tfrac12\sum_{j \in \St_i} p_{ij}w_j. \]
\end{lemma}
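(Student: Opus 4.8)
The plan is to reduce the inequality to an elementary comparison of two quadratic forms on the signature space, carried out one machine at a time; this is essentially a (very crude) norm distortion estimate, which is exactly why it is ``much more easily demonstrated'' than the $\pi/4$ bound of Theorem~\ref{thm:chung}. Set $\vec{u}=\sig(\stv)$. By~\eqref{eq:costrand}, $\Crand(\stv)=\sum_{i\in I}(\vec{u}^i)^T M\vec{u}^i+\tfrac12\niggly{x}$. On the other hand, Lemma~\ref{lem:smithcost} gives $2\Csmith(\stv)-\niggly{x}=\sum_{i\in I}\sum_{j,k\in\St_i}w_jw_k\min\{\pow{i,j},\pow{i,k}\}$ (this identity appears within the proof of that lemma), which, after grouping the jobs on each machine according to their ratio $\pow{i,j}$, equals $\sum_{i\in I}(\vec{u}^i)^T N\vec{u}^i$ where $N$ is the matrix with $N_{rs}=\min(r,s)$. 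Hence the claim is equivalent to $\sum_{i}(\vec{u}^i)^T M\vec{u}^i+\tfrac12\niggly{x}\le\sum_{i}(\vec{u}^i)^T N\vec{u}^i$, and since every signature has nonnegative entries it suffices to show, for each machine $i$, that $(\vec{u}^i)^T(N-M)\vec{u}^i\ge\tfrac12\sum_{j\in\St_i}w_j\p{i,j}$.

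Both ingredients needed for this per-machine bound are immediate. First, $\tfrac{rs}{r+s}\le\min(r,s)$ for all $r,s\ge0$, because $\tfrac1r+\tfrac1s\ge\tfrac1{\min(r,s)}$; thus every entry of $N-M$ is nonnegative, and since $\vec{u}^i\ge\vec{0}$ we may discard the off-diagonal contributions to get $(\vec{u}^i)^T(N-M)\vec{u}^i\ge\sum_r(N-M)_{rr}(u_r^i)^2=\sum_r\tfrac r2(u_r^i)^2$, using that $(N-M)_{rr}=r-\tfrac{r^2}{2r}=\tfrac r2$. Second, a job $j\in\St_i$ with $\pow{i,j}=r$ satisfies $\p{i,j}=rw_j$, and $(u_r^i)^2=\bigl(\sum_{j\in\St_i:\,\pow{i,j}=r}w_j\bigr)^2\ge\sum_{j\in\St_i:\,\pow{i,j}=r}w_j^2$; therefore $\sum_r\tfrac r2(u_r^i)^2\ge\tfrac12\sum_r r\sum_{j\in\St_i:\,\pow{i,j}=r}w_j^2=\tfrac12\sum_{j\in\St_i}w_j\p{i,j}$, exactly as required. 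Summing over all machines then yields $\Crand(\stv)\le2\Csmith(\stv)-\niggly{x}$.

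The only part warranting attention — and the reason the blunt estimate $\sum_i(\vec{u}^i)^T M\vec{u}^i\le\sum_i(\vec{u}^i)^T N\vec{u}^i$ does not by itself suffice — is that this blunt estimate only gives $\Crand(\stv)\le2\Csmith(\stv)-\tfrac12\niggly{x}$, i.e.\ it is short by the term $\tfrac12\niggly{x}$. The fix is to observe that the diagonal of $N-M$ carries exactly the missing slack: the elementary bound $\sum_{j\in\St_i:\,\pow{i,j}=r}w_j^2\le(u_r^i)^2$ converts that diagonal slack into precisely $\tfrac12\niggly{x}$. I expect this small piece of bookkeeping to be the only nonroutine step; everything else is a direct computation.
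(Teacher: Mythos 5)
Your proof is correct and rests on the same two facts as the paper's: the pointwise bound $\tfrac{rs}{r+s}\le\min(r,s)$ off the diagonal, and the observation that the diagonal carries exactly the $\tfrac12\niggly{x}$ of slack needed (the paper does this directly on the job-indexed sums, splitting off the $j=k$ terms, whereas you work with the ratio-indexed signature and recover the job-level diagonal via $(u_r^i)^2\ge\sum_j w_j^2$). The repackaging through the matrices $M$ and $N$ is harmless and all steps check out.
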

\begin{proof}%[Proof of Lemma~\ref{lem:crude}]
Consider a particular machine $i$. We have
\begin{align*}
 \sum_{j,k \in \St_i} w_jw_k\randterm{i,j,k} &= \sum_{j \neq k \in \St_i} w_jw_k\randterm{i,j,k} + \tfrac12\sum_{j \in \St_i} w_jp_{ij}\\
&\leq \sum_{j \neq k \in \St_i} w_jw_k \smithterm{i,j,k} + \tfrac12\sum_{j \in \St_i}w_jp_{ij}\\
&= \sum_{j, k \in \St_i} w_jw_k \smithterm{i,j,k} - \tfrac12\sum_{j \in \St_i}w_jp_{ij}.
\end{align*}
Summing over all machines gives
\[
\Crand(\stv) - \tfrac12 \niggly{\stv} \leq  2 \bigl(\Csmith(\stv) - \tfrac12\niggly{\stv}\bigr) - \tfrac12  \niggly{\stv}
\]
from which the bound is immediate.
%Neil: this is not as clear as it could be, but not such a big deal for the initial submission. Sort this out sometime though!
\end{proof}

\paragraph*{The upper bound.}
We are now ready to prove the main theorem of this section. 

\begin{proof}[Proof of Theorem~\ref{thm:randupper}]
Let $\stv$ be the assignment at a Nash equilibrium, and $\stv*$ the assignment of the optimal solution, 
and let $\vec{u} = \sig(\stv)$ and $\vec{u^*} = \sig(\stv*)$.

From the Nash condition and Lemma~\ref{lem:randcost}, we obtain
\begin{align*}
    \Crand(\stv) &\leq \sum_{j \in \Jobs} w_j \crand_j(\stv_{-j}, \st*_j)\\
    &\leq \sum_{i \in \Macs}\sum_{j \in \St*_i}w_jM(\vec{u}^i)_{\rho_{ij}} \,+\, \niggly{\stv*}\\
    &= \sum_{i \in \Macs} (\vec{u^*}^i)^T M \vec{u}^i \,+\, \niggly{\stv*}\\
    &= \ip{\vec{u^*}, \vec{u}} + \niggly{\stv*}.
\end{align*}
Applying Cauchy-Schwartz
\begin{align}
	\Crand(\stv) &\leq \norm{\vec{u^*}}\norm{\vec{u}} + \niggly{\stv*} \label{eq:cs}\\
&\leq \tfrac23\norm{\vec{u^*}}^2 + \tfrac{3}{8}\norm{\vec{u}}^2 + \niggly{\stv*},\notag
\end{align}
Now recalling the definition of $\sig$ and applying Lemma~\ref{lem:crude}, we obtain
\begin{align*}
    \Crand(\stv) &\leq \tfrac23(\Crand(\stv*) - \tfrac12\niggly{\stv*}) + \tfrac{3}{8}(\Crand(\stv)-\tfrac12\niggly{\stv}) + \niggly{\stv*}\\
	&\leq \tfrac23(2\Csmith(\stv*) - \tfrac32\niggly{\stv*}) + \tfrac38(\Crand(\stv)-\tfrac12\niggly{\stv}) + \niggly{\stv*}\\
    &\leq \tfrac43\Csmith(\stv*) + \tfrac38\Crand(\stv).
\end{align*}
%This gives a PoA of \[ \frac{4/3}{1-3/8} = \frac{32}{15} = 2.1333\ldots. \]
This gives a PoA of $32/15$. 

\medskip
In the case where $\niggly{\stv*}$ is negligible, we continue from~\eqref{eq:cs}:
\begin{align*} 
	\Crand(\stv) &\leq \norm{\vec{u^*}}\norm{\vec{u}}\\
	&\leq \tfrac12\norm{\vec{u^*}}^2 + \tfrac12\norm{\vec{u}}^2\\
	&\leq \tfrac{\pi}{4}\Csmith(\stv*) + \tfrac12\Crand(\stv), 
\end{align*}
by Theorem~\ref{thm:chung} and Equation~\ref{eq:costrand}. Thus $\Crand(\stv)/\Csmith(\stv*) \leq \pi/2$.
\end{proof}

As noted in Appendix~\ref{appendix:lowerbound}, a slight modification of the construction used to prove Proposition~\ref{prop:lbdet} can be used to show that the worst-case PoA of {\rand} is at least $5/3$.

\section{Existence of PNE and Algorithm}

\paragraph*{Existence of PNE.}

Under {\smi} it may happen that no pure Nash equilibrium exists 
\cite{CQ10}. 
Here we show that \pro\ and \Rand\ both induce exact potential games, which hence always have PNE. 
For the case of \pro, this generalizes \cite[Theorem 3]{DT09}, which addresses \equ.

\begin{theorem}\label{thm:potential}
The {\pro} mechanism induces exact potential games, with potential
\begin{equation}\label{eq:pspot}
	\PotPS(\stv)=\tfrac{1}{2}\Cps(\stv) + \tfrac12\niggly{\stv}.
\end{equation}
Likewise, the \Rand\ mechanism yields exact potential games with potential
\begin{equation}\label{eq:randpot}
	\PotR(\stv)=\tfrac12\Crand(\stv) + \tfrac12\niggly{\stv}. 
\end{equation}
\end{theorem}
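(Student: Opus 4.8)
The plan is to verify directly that the two functions $\PotPS$ and $\PotR$ are exact potential functions, i.e.\ that when a single job $j$ moves from one machine to another, the change in the potential equals the change in that job's (expected, in the randomized case) weighted completion time, with all other jobs' contributions to the potential left invariant. The natural way to organize this is to express each potential as a sum over machines of a ``per-machine'' quantity, and to observe that moving job $j$ only alters the terms associated with the machine it leaves and the machine it enters; by symmetry it then suffices to check a single deletion/insertion.

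For \pro, I would start from Lemma~\ref{lem:proc}, which gives the closed form $w_j\cps_j = \sum_{k \in \St_i \setminus\{j\}} w_jw_k\min\{\pow{i,k},\pow{i,j}\} + w_j\p{i,j}$ on machine $i$. Writing $\Cps(\stv) = \norm{\sig(\stv)}^2 = \sum_i \sum_{j,k\in \St_i} w_jw_k\min\{\pow{i,j},\pow{i,k}\}$ (via Lemma~\ref{lem:smithcost}'s computation, or directly), we get $\PotPS(\stv) = \tfrac12\sum_i\sum_{j,k\in\St_i} w_jw_k\min\{\pow{i,j},\pow{i,k}\} + \tfrac12\sum_i\sum_{j\in\St_i}w_j\p{i,j}$. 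The diagonal terms $j=k$ contribute $\tfrac12 w_j^2\pow{i,j} = \tfrac12 w_j\p{i,j}$, which combine with the $\niggly{\cdot}$ term to give $w_j\p{i,j}$; the off-diagonal terms give, for each unordered pair, $w_jw_k\min\{\pow{i,j},\pow{i,k}\}$ counted twice inside the $\tfrac12$, i.e.\ once. Thus adding a job $j$ to machine $i$ increases $\PotPS$ by exactly $\sum_{k\in\St_i\setminus\{j\}}w_jw_k\min\{\pow{i,k},\pow{i,j}\} + w_j\p{i,j}$, which is precisely $w_j\cps_j$ by Lemma~\ref{lem:proc}. Since the move of $j$ off its old machine decreases $\PotPS$ by its old weighted completion time and onto its new machine increases it by the new one, $\Delta\PotPS = \Delta(w_j\cps_j)$, establishing the exact potential property.

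For \Rand, the argument is structurally identical but uses the inner product induced by $M$ and Lemma~\ref{lem:randcost}. From~\eqref{eq:costrand}, $\PotR(\stv) = \tfrac12\norm{\sig(\stv)}^2 + \tfrac12\niggly{\stv} = \tfrac12\sum_i (\vec{u}^i)^T M \vec{u}^i + \tfrac12\sum_i\sum_{j\in\St_i}w_j\p{i,j}$, where $\vec{u} = \sig(\stv)$ and $M_{rs} = rs/(r+s)$. Expanding the quadratic form as $\sum_i\sum_{j,k\in\St_i}w_jw_k M_{\pow{i,j}\pow{i,k}}$ and separating the diagonal: $M_{\pow{i,j}\pow{i,j}} = \tfrac12\pow{i,j}$, so the diagonal contributes $\tfrac12 w_j^2\pow{i,j} = \tfrac12 w_j\p{i,j}$, which again merges with the $\niggly{\cdot}$ term. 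So inserting $j$ into machine $i$ raises $\PotR$ by $\sum_{k\in\St_i\setminus\{j\}}w_jw_k M_{\pow{i,j}\pow{i,k}} + w_j\p{i,j}$; comparing with Lemma~\ref{lem:randcost} (case (ii), the ``switching'' case, which has the full $\p{i,j}$ rather than $\tfrac12\p{i,j}$), this equals $w_j\crand_j$ evaluated on the new configuration. Hence best-response moves change $\PotR$ by exactly the change in the moving job's expected weighted completion time.

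I do not expect a serious obstacle here; the only thing requiring a little care is the bookkeeping of the diagonal terms $j=k$, and making sure the ``$\tfrac12\p{i,j}$ vs.\ $\p{i,j}$'' discrepancy between the two cases of Lemmas~\ref{lem:proc} and~\ref{lem:randcost} is accounted for correctly by the extra $\tfrac12\niggly{\cdot}$ in the potentials — this is exactly what makes the factor-$\tfrac12$ in front of $\Cps$ and $\Crand$ (rather than $1$) the right choice. Once the exact potential property is verified, the existence of a PNE follows from the standard fact stated in Section~\ref{sec:prelim} that any (finite) potential game has a pure Nash equilibrium, and convergence of best-response dynamics is immediate since each improving move strictly decreases the (bounded, finitely-valued-on-a-finite-strategy-space) potential.
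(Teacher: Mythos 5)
Your argument is correct in substance, and it reorganizes the paper's proof rather than reproducing it: the paper fixes the moving job $j$ and tracks the change $\PotD_k=\tfrac12 w_k\bigl(\cps_k(\stv')-\cps_k(\stv)\bigr)$ in every \emph{other} job's cost on the two affected machines, using the symmetry of $w_jw_k\min\{\pow{i,j},\pow{i,k}\}$ to identify the total externality $j$ removes (resp.\ creates) with $j$'s own delay, whereas you write the potential as a per-machine quadratic form and compute the insertion/deletion marginal directly. The two computations are equivalent --- both rest on Lemmas~\ref{lem:proc} and~\ref{lem:randcost} and on the symmetric pairwise interaction --- but your version makes the mechanism more transparent (the potential is literally half the sum of pairwise interactions plus suitably weighted self-terms), and it handles \pro\ and \Rand\ uniformly. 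One bookkeeping slip in the \Rand\ half, of exactly the kind you warned yourself about: since $\Crand(\stv)=\norm{\sig(\stv)}^2+\tfrac12\niggly{\stv}$ by~\eqref{eq:costrand}, the expansion of~\eqref{eq:randpot} is $\PotR(\stv)=\tfrac12\norm{\sig(\stv)}^2+\tfrac34\niggly{\stv}$, not $\tfrac12\norm{\sig(\stv)}^2+\tfrac12\niggly{\stv}$ as you wrote. With your coefficient $\tfrac12$ the insertion marginal would come out as $\sum_{k}w_jw_kM_{\pow{i,j}\pow{i,k}}+\tfrac34 w_j\p{i,j}$ and the verification against case~(ii) of Lemma~\ref{lem:randcost} would fail by $\tfrac14 w_j\p{i,j}$; the marginal you actually state (ending in the full $w_j\p{i,j}$) is the one produced by the correct coefficient $\tfrac34$, so the error is confined to that intermediate line and the conclusion stands once it is fixed. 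The \pro\ computation is exactly right as written.
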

\begin{proof}%[Proof of Theorem~\ref{thm:potential}]
	We give the proof for {\pro}; the proofs for \Rand\ and {\apr} are similar.

%We show that $\PotPS(\stv)=\frac{1}{2}\sum_{i\in \Macs}{ w_{i} \left( c_{i}(\stv)+p_{i\st_{i}}\right) }$ is an exact potential function for any such game. 

Consider an assignment $\stv$ and a job $j \in \Jobs$, and let $i$ be the machine to which $j$ is assigned.
Define $\stv'$ as the assignment differing from $\stv$ only in that job $j$ moves to some machine $i' \neq i$.

We may write the change in the potential function as
\begin{align}
	\PotPS(\stv') - \PotPS(\stv) &=  \sum_{k \in \Jobs} \PotD_k \,+\, \tfrac12w_j(\p{i',j}-\p{i,j}),\label{eq:potsplit}\\
	\intertext{where}
	 \PotD_k &= \tfrac12w_k\left(\cps_k(\stv') - \cps_k(\stv)\right).\notag
\end{align}
%This has two components: the term corresponding to $j$ itself, and the effect on all other jobs due to the move; only jobs on $i$ and $i'$ are affected.
%The first component yields
%\[ \Delta_j = \tfrac12w_j\left(\cps_j(\stv') - \cps_j(\stv) + \p{i',j} - \p{i,j}\right). \]
Consider a job $k \neq j$ on machine $i$. Since only job $j$ left the machine, we have from Lemma~\ref{lem:proc} that
\[ \cps_{k}(\stv') - \cps_{k}(\stv) = -w_j\min\{\pow{i,j}, \pow{i,k}\}. \]
Thus
\begin{align*}
 \sum_{k \in \St_i \setminus \{j\}} \PotD_k &= -\tfrac12w_j\sum_{k \in \St_i \setminus \{j\}} w_k\min\{\pow{i,j},\pow{i,k}\}\\
&= -\tfrac12w_j(\cps_j(\stv) + \p{i,j}). 
\end{align*}
Similarly, considering jobs on $i'$ yields
\begin{align*}
 \sum_{k \in \St_{i'}} \PotD_k &= \tfrac12w_j\sum_{k \in \St_{i'}} w_k\min\{\pow{i',j},\pow{i',k}\}\\
&= \tfrac12w_j(\cps_j(\stv') - \p{i',j}). 
\end{align*}
All other jobs are unaffected by the change, and so do not contribute to~\eqref{eq:potsplit}. Summing all terms (including $\PotD_j$), we obtain
\[ \PotPS(\stv') - \PotPS(\stv) = w_j(\cps_j(\stv') - \cps_j(\stv)), \]
exactly the change in the cost % completion time 
of job $j$.
\end{proof}

\paragraph*{A combinatorial approximation algorithm.}
Finally we define {\apr}, a deterministic strongly local policy that we will use in order to design a
combinatorial constant factor approximation algorithm for the underlying optimization problem. The completion
time of a job in this policy is exactly its completion time if {\pro} were being used plus its processing time, i.e. $c_j^A(\stv)=c_j^{PS}(\stv)+\p{\st_j,j}$.
%\begin{equation*}
%c_j=\sum_{\substack{k \in \St_i \setminus \{j\} \\ \pow{i,k} \leq \pow{i,j}}}{\p{i,k}}+ \sum_{\substack{k \in \St_i\\ \pow{i,k} > \pow{i,j}}}{\frac{\p{i,j} w_k}{w_j}} \,\,+\,\, 2\p{i,j}
%\end{equation*}
%

Following the proof of Theorem~\ref{thm:potential} we can show that $\PotA(\stv)=\tfrac{1}{2}\Capr(\stv) + \niggly{\stv}$ is a potential function
for the games induced by this mechanism, and following the proof of Theorem~\ref{thm:pro}, the PoA of the mechanism is at most 4. The advantage of this mechanism is that $\Capr(\stv)=2\Csmith(\stv)$ for any configuration $\stv$ and therefore, despite the larger PoA bound, computing an
equilibrium allocation for the induced game yields a scheduling algorithm with approximation ratio 2
(because the scheduling algorithm, given the allocation, applies {\smi} and not {\apr}).

Computing such an allocation might be hard in general, but we show that imitating a natural best response dynamics
gives rise to a simple polynomial time local search ($2+\epsilon$)-approximation algorithm.
At each iteration, the scheduling algorithm reassigns the job which thereby obtains the largest 
possible improvement in the {\apr} costing (a best response move).
In other words, we use {\apr} in order to find a good allocation and then switch to {\smi}.

%By applying \cite{AAEMS08}, we get the following theorem regarding the performance of this algorithm.
%
In order to bound the running time of our local-search algorithm we will use the 
$\beta$-nice concept of \cite{AAEMS08}.
Given some configuration $\stv$, let 
\[ \Delta(\stv)=\sum_{j}{\left(c_j(\stv)-c_j(\stv_{-j},\st_j')\right)},\] 
where $\st_j'$ is the best response to $\stv_{-j}$ for player $j$. Awerbuch et al.~\cite{AAEMS08} define an 
exact potential game with potential function $\Phi$ and social cost function $C$ to be $\beta$-nice if and 
only if, for any configuration $\stv$, both $\Phi(\stv)\leq C(\stv)$ and $C(\stv)\leq \beta \OPT+2\Delta(\stv)$
hold\footnote{In their definition, unlike ours, \OPT\ denotes the optimum social cost w.r.t.\ the game's cost
functions.}. Among other dynamics, they consider what they call \emph{basic dynamics}, where in each step,
among all players that can uniquely deviate and improve their cost by some factor $\alpha$, we choose the
one with the largest absolute improvement, and allow that player to move. They subsequently show the following lemma,
where $\stv*$ is the configuration that minimizes the potential function.

\begin{lemma}[{\cite{AAEMS08}}]\label{lem:basic}
Let $\frac{1}{8}>\epsilon>\alpha$. Consider an exact potential game that satisfies 
the $\beta$-nice property and any initial state $\stv^0$. Then basic dynamics generates a profile 
$\stv$ with $C(\stv)\leq \beta(1+O(\epsilon))\OPT$ in at most $O\left(\frac{n}{\epsilon}\log{\left(\frac{\Phi(\stv^0)}{\Phi(\stv*)}\right)}\right)$ steps.
\end{lemma}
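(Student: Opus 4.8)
The plan is the standard potential-function argument of Awerbuch et al.: show that, as long as the current configuration still has social cost exceeding $\beta(1+8\epsilon)\OPT$, one step of basic dynamics decreases $\Phi$ by a fixed \emph{multiplicative} factor $1-\tfrac{\epsilon}{n}$, and then conclude by observing that $\Phi$ can never fall below $\Phi(\stv*)$, so such steps can occur only $O\bigl(\tfrac{n}{\epsilon}\log(\Phi(\stv^0)/\Phi(\stv*))\bigr)$ times.

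\emph{One-step estimate.} Suppose $\stv$ is a configuration with $C(\stv)>\beta(1+8\epsilon)\OPT$. The $\beta$-nice inequality $C(\stv)\le\beta\OPT+2\Delta(\stv)$ together with $\beta\OPT<C(\stv)/(1+8\epsilon)$ yields $2\Delta(\stv)>\bigl(1-\tfrac1{1+8\epsilon}\bigr)C(\stv)=\tfrac{8\epsilon}{1+8\epsilon}C(\stv)\ge 4\epsilon\,C(\stv)$, the last step using $1+8\epsilon<2$; hence $\Delta(\stv)>2\epsilon\,C(\stv)$. Now split the players contributing to $\Delta(\stv)$ into those whose best response $\st_j'$ improves $c_j$ by a multiplicative factor at least $\alpha$ and the rest. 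A player of the second kind contributes less than $\alpha\,c_j(\stv)$ to $\Delta(\stv)$, so these players together contribute less than $\alpha\,C(\stv)$; hence the players of the first kind contribute more than $\Delta(\stv)-\alpha\,C(\stv)>(2\epsilon-\alpha)C(\stv)>\epsilon\,C(\stv)$, using $\alpha<\epsilon$. As there are at most $n$ of them, at least one has \emph{absolute} improvement exceeding $\tfrac{\epsilon}{n}C(\stv)\ge\tfrac{\epsilon}{n}\Phi(\stv)$, the last inequality being the other half of the $\beta$-nice property. Basic dynamics moves, among the $\alpha$-improvers, the one of largest absolute improvement, hence a player whose improvement is at least this large; since the game is an \emph{exact} potential game, $\Phi$ decreases by exactly that improvement, so $\Phi(\stv')<(1-\tfrac{\epsilon}{n})\Phi(\stv)$ for the successor configuration $\stv'$.

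\emph{Iteration.} A move is made only when some player strictly improves, so $\Phi$ is non-increasing along the run $\stv^0,\stv^1,\dots$ and bounded below by $\Phi(\stv*)$. If $C(\stv^t)>\beta(1+8\epsilon)\OPT$ held for all $t\le T:=\bigl\lceil\tfrac{n}{\epsilon}\ln(\Phi(\stv^0)/\Phi(\stv*))\bigr\rceil$, iterating the one-step estimate would give $\Phi(\stv^T)<(1-\tfrac{\epsilon}{n})^T\Phi(\stv^0)\le e^{-\epsilon T/n}\Phi(\stv^0)\le\Phi(\stv*)$, contradicting the minimality of $\Phi(\stv*)$. Thus within $O\bigl(\tfrac{n}{\epsilon}\log(\Phi(\stv^0)/\Phi(\stv*))\bigr)$ steps the dynamics reaches a profile $\stv$ with $C(\stv)\le\beta(1+8\epsilon)\OPT=\beta(1+O(\epsilon))\OPT$.

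The one delicate point is the bookkeeping in the one-step estimate: basic dynamics only ever considers players whose \emph{multiplicative} improvement is at least $\alpha$, whereas $\Delta(\stv)$ aggregates the \emph{additive} improvements of \emph{all} players. One has to argue, using $0\le\alpha<\epsilon<\tfrac18$, that the players the dynamics ignores cannot carry more than a small fraction of $\Delta(\stv)$, and then convert the surviving additive improvement back into a per-step guarantee on the single move actually performed; the constant $8$ in the threshold $\beta(1+8\epsilon)\OPT$ is chosen precisely so that this fraction argument closes under the hypothesis $\epsilon<\tfrac18$.
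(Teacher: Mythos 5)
This lemma is quoted verbatim from Awerbuch et al.\ \cite{AAEMS08}; the paper itself gives no proof, so there is nothing internal to compare against. Your reconstruction is the standard potential-decrease argument from that source and it is sound: the threshold $C(\stv)>\beta(1+8\epsilon)\OPT$ combined with $\beta$-niceness forces $\Delta(\stv)>2\epsilon\,C(\stv)$, discarding the sub-$\alpha$ improvers costs at most $\alpha\,C(\stv)<\epsilon\,C(\stv)$, pigeonholing over $n$ players yields one $\alpha$-improver with absolute gain at least $\tfrac{\epsilon}{n}C(\stv)\ge\tfrac{\epsilon}{n}\Phi(\stv)$, and exactness of the potential converts this into the multiplicative decay $(1-\tfrac{\epsilon}{n})$ that gives the stated step bound.
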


We define a \emph{coordination mechanism} to be $\beta$-nice if all the games that it induces are 
$\beta$-nice with \OPT\ being the optimum social cost of the underlying machine scheduling problem, independent 
of the coordination mechanism. Our next lemma shows that {\apr} satisfies these conditions.

\begin{lemma}\label{lem:beta}
The {\apr} coordination mechanism is $\beta$-nice with $\beta=4$.
\end{lemma}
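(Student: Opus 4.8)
The plan is to check the two defining conditions of $\beta$-niceness (with $\beta=4$) directly, leaning on the inner-product identities already established. Here $\OPT$ is the optimum of the underlying scheduling problem, and since {\smi} is optimal for every fixed assignment we have $\OPT=\Csmith(\stv*)$, where $\stv*$ is the assignment of minimum Smith cost. Write $\vec f=\sig(\stv)$ and $\vec{f^*}=\sig(\stv*)$ (the same map $\sig$ used for {\smi} and {\pro}). Throughout I use: $\Capr(\stv)=2\Csmith(\stv)=\norm{\vec f}^2+\niggly{\stv}$, so $\norm{\vec f}^2=\Cps(\stv)\le\Capr(\stv)$ (Theorem~\ref{thm:pro}); $\Csmith(\stv*)=\tfrac12\norm{\vec{f^*}}^2+\tfrac12\niggly{\stv*}$ (Lemma~\ref{lem:smithcost}), so $\norm{\vec{f^*}}^2=2\OPT-\niggly{\stv*}$; and $\niggly{\cdot}\le\Csmith(\cdot)$ for every configuration, which holds because a job's {\smi} completion time is at least its own processing time.

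Condition (i), $\PotA\le\Capr$, is immediate: the stated potential is $\PotA(\stv)=\tfrac12\Capr(\stv)+\niggly{\stv}=\Csmith(\stv)+\niggly{\stv}\le 2\Csmith(\stv)=\Capr(\stv)$.

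For condition (ii) I must show $\Capr(\stv)\le 4\OPT+2\Delta(\stv)$ for an \emph{arbitrary} configuration $\stv$ (not only at equilibrium); recall that the cost of player $j$ is its weighted completion time $w_jc_j^A$, so $\sum_{j}w_jc_j^A(\stv)=\Capr(\stv)$. Since the best response of $j$ to $\stv_{-j}$ is at least as good for $j$ as moving to its optimal machine $\st_j^*$, we have $\Delta(\stv)\ge\Capr(\stv)-\sum_{j\in\Jobs}w_jc_j^A(\stv_{-j},\st_j^*)$, so it is enough to bound the total deviation cost. Writing $c_j^A=c_j^{PS}+\p{\st_j^*,j}$ and applying the formula of Lemma~\ref{lem:proc} to $c_j^{PS}(\stv_{-j},\st_j^*)$, then enlarging $\St_{\st_j^*}\setminus\{j\}$ to $\St_{\st_j^*}$, gives
\[
\sum_{j\in\Jobs}w_jc_j^A(\stv_{-j},\st_j^*)\;\le\;\sum_{i\in\Macs}\sum_{j\in\St*_i}\sum_{k\in\St_i}w_jw_k\min\{\pow{i,j},\pow{i,k}\}\;+\;2\niggly{\stv*}\;=\;\ip{\vec{f^*},\vec f}+2\niggly{\stv*},
\]
the last step being the integral identity used in the proof of Theorem~\ref{thm:smi}; note the processing-time term of {\apr} produces the coefficient $2$ in front of $\niggly{\stv*}$ (one copy from inside $c_j^{PS}$, one from the extra $\p{\st_j^*,j}$). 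Now Cauchy--Schwartz followed by $ab\le a^2+\tfrac14b^2$ gives $\ip{\vec{f^*},\vec f}\le\norm{\vec{f^*}}\,\norm{\vec f}\le\norm{\vec{f^*}}^2+\tfrac14\norm{\vec f}^2\le(2\OPT-\niggly{\stv*})+\tfrac14\Capr(\stv)$, so the deviation cost is at most $2\OPT+\niggly{\stv*}+\tfrac14\Capr(\stv)\le 3\OPT+\tfrac14\Capr(\stv)$ using $\niggly{\stv*}\le\OPT$. Substituting into the bound for $\Delta(\stv)$ yields $\tfrac34\Capr(\stv)\le\Delta(\stv)+3\OPT$, i.e.\ $\Capr(\stv)\le\tfrac43\Delta(\stv)+4\OPT\le 2\Delta(\stv)+4\OPT$, which is condition (ii) with $\beta=4$.

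The only point needing care is making the constants close. The relevant observation is that the very same Cauchy--Schwartz split used in the price-of-anarchy proofs of Theorems~\ref{thm:smi} and~\ref{thm:pro} works here too: it keeps the coefficient of $\Capr(\stv)$ at $\tfrac14$, so that $1/(1-\tfrac14)=\tfrac43\le 2$ --- which is exactly what lets the ``$2\Delta$'' in the definition of $\beta$-niceness absorb the equilibrium term --- while the $\OPT$-coefficient still closes at $4$. The rest is bookkeeping with the identities of Lemmas~\ref{lem:smithcost} and~\ref{lem:proc}.
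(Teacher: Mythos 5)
Your proof is correct and follows the same route as the paper: verify $\PotA\le\Capr$ directly, lower-bound $\Delta(\stv)$ by the deviation to $\stv^*$, and then bound $\sum_j w_j\capr_j(\stv_{-j},\st^*_j)\le \tfrac14\Capr(\stv)+3\Csmith(\stv^*)$ via the inner-product/Cauchy--Schwartz machinery of Theorems~\ref{thm:smi} and~\ref{thm:pro}, yielding $\Capr(\stv)\le 4\Csmith(\stv^*)+\tfrac43\Delta(\stv)$. The only difference is that you spell out the derivation of that key inequality (which the paper leaves as ``following the same approach as in the proof of Theorem~\ref{thm:pro}''), and your constants match the paper's exactly.
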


\begin{proof}
It is easy to see that the potential function $\PotA(\stv)=\tfrac{1}{2}\Capr(\stv) + \niggly{\stv}$ satisfies $\PotA(\stv)\leq \Capr(\stv)$ for all configurations $\stv$, since $\niggly{\stv} \leq \tfrac12 \Capr(\stv)$. Therefore, what we need to show is that:
\begin{equation*}
\Capr(\stv)\leq \beta \Csmith(\stv*)+2\Delta(\stv),
\end{equation*}
where 
\begin{equation*}
\Delta(\stv)=\sum_{j\in \Jobs}{\left(w_j\capr_j(\stv)-w_j\capr_j(\stv_{-j},\st_j')\right)},
\end{equation*}
and $\st_j'$ is the best response for player $j$ in configuration $\stv$. We note that since $\capr_j(\stv_{-j},\st_j')\leq \capr_j(\stv_{-j},\st*_j)$, then:
\begin{equation*}
\Capr(\stv)-\sum_{j\in \Jobs}{w_j\capr_j(\stv_{-j},\st*_j)}\leq \Delta(\stv).
\end{equation*}
Now following the same approach as in the proof of Theorem~\ref{thm:pro}, we easily obtain that the PoA is at most $4$.
More specifically, we can obtain the inequality
\[ \sum_{j \in \Jobs} w_j\capr_j(\stv_{-j}, \st*_j) \leq \tfrac14\Capr(\stv) + 3\Csmith(\stv*). \]
Summing these two inequalities and simplifying we obtain
%\begin{equation*}
%\Capr(\stv)\leq \tfrac{1}{4} \Capr(\stv)+ 3 \Csmith(\stv*) + \Delta(\stv),
%\end{equation*}
%which then yields
\begin{equation*}
\Capr(\stv)\leq 4 \Csmith(\stv*) + \tfrac{4}{3}\Delta(\stv),
\end{equation*}
proving the lemma.
\end{proof}

If we consider that every machine uses {\apr}, then, as a result of Lemma~\ref{lem:beta} along 
with Lemma~\ref{lem:basic} and the fact that $\Capr(\stv)=2\Csmith(\stv)$ for any configuration 
$\stv$, we get the following theorem bounding the running time of our algorithm.

\begin{theorem}\label{thm:alg}
Starting from any initial configuration $\stv^0$ and following basic dynamics leads to 
a profile $\stv$ with $\Csmith(\stv)\leq (2+O(\epsilon))\OPT$ in at most $O\left(\frac{n}{\epsilon}\log{\left(\frac{\PotA(\stv^0)}{\PotA(\stv*)}\right)}\right)$ steps.
\end{theorem}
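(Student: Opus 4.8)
The plan is to obtain the statement as a direct consequence of the three facts already in hand: that the {\apr} coordination mechanism is $4$-nice (Lemma~\ref{lem:beta}), the generic convergence guarantee for basic dynamics in $\beta$-nice exact potential games (Lemma~\ref{lem:basic}), and the identity $\Capr(\stv) = 2\Csmith(\stv)$ relating the game we simulate to the schedule we ultimately produce.

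First I would check that the local-search procedure we run is literally an instance of the ``basic dynamics'' of Lemma~\ref{lem:basic}, applied to the exact potential game induced by {\apr} --- whose potential function is $\PotA(\stv) = \tfrac12\Capr(\stv) + \niggly{\stv}$, as noted following Theorem~\ref{thm:potential}. Concretely, at each iteration we look at all jobs $j$ that can improve their {\apr} cost by the fixed factor $\alpha$, and among them reassign the one with the largest absolute improvement $w_j\capr_j(\stv) - w_j\capr_j(\stv_{-j},\st_j')$; this is exactly the primitive of Lemma~\ref{lem:basic}. We fix $\alpha$ with $\tfrac18 > \epsilon > \alpha$ so that the hypothesis of that lemma holds, and we note that the logarithm $\log(\PotA(\stv^0)/\PotA(\stv*))$ is well defined because we may assume $\stv^0$ has finite cost (no job ever benefits from selecting a machine on which its processing time is infinite), so $\PotA(\stv^0) < \infty$.

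Then I would feed $\beta = 4$ into Lemma~\ref{lem:basic}. By Lemma~\ref{lem:beta}, every game induced by {\apr} is $4$-nice with $\OPT = \min_{\stv'}\Csmith(\stv')$, the optimum weighted sum of completion times of the scheduling instance (recall Smith's Rule is optimal for a fixed assignment, so this genuinely is the scheduling optimum, not merely a game-specific quantity). Lemma~\ref{lem:basic} then hands us a configuration $\stv$ with $\Capr(\stv) \leq 4(1 + O(\epsilon))\OPT$, reached within $O\left(\tfrac{n}{\epsilon}\log\left(\tfrac{\PotA(\stv^0)}{\PotA(\stv*)}\right)\right)$ steps, where $\stv*$ here denotes the potential minimizer. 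Finally, substituting $\Capr(\stv) = 2\Csmith(\stv)$ gives $\Csmith(\stv) = \tfrac12\Capr(\stv) \leq 2(1+O(\epsilon))\OPT = (2+O(\epsilon))\OPT$; and since the algorithm, once it has the allocation $\stv$, schedules each machine's jobs in Smith's Rule order, the schedule it outputs has exactly cost $\Csmith(\stv)$, as required.

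The argument is essentially a corollary, so the only real work is bookkeeping rather than new ideas. The two points to be careful about are: (i) making the match between ``reassign the best-response move of largest absolute improvement'' and the exact primitive of Lemma~\ref{lem:basic} precise, including tracking the improvement factor $\alpha$ against the error parameter $\epsilon$; and (ii) not conflating the two distinct objects both denoted $\stv*$ in the surrounding text --- the scheduling optimum that enters the $\beta$-nice inequality of Lemma~\ref{lem:beta}, versus the potential minimizer that appears in the step count of Lemma~\ref{lem:basic} and in the theorem statement.
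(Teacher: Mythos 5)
Your proposal is correct and follows exactly the paper's route: the theorem is stated there as an immediate consequence of Lemma~\ref{lem:beta} (the $4$-niceness of {\apr}), Lemma~\ref{lem:basic} applied with $\beta=4$, and the identity $\Capr(\stv)=2\Csmith(\stv)$, which is precisely your chain of deductions. Your additional bookkeeping --- distinguishing the potential minimizer from the scheduling optimum and checking that the dynamics match the primitive of Lemma~\ref{lem:basic} --- is sound and only makes explicit what the paper leaves implicit.
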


\section{Concluding remarks}
On mapping machines to edges of a parallel link network, the machine
scheduling
problem for the case of related machines becomes a special
case of general
selfish routing games. In this context, the ordering policies on
machines correspond
to local queuing policies at the edges of the network. From this
perspective, it would be
interesting to generalize our results to network routing games.
Designing such local
queuing policies would be an important step toward more realistic models of
selfish routing
games when the routing happens over time~\cite{HMRS09, FOV08, KS11}. We hope that our new
technique along with the policies proposed in this paper could serve
as a building block
toward this challenging problem.

All the mechanisms discussed here are strongly local. For the case of
the makespan
objective, one can improve the approximation ratio from $\Theta(m)$ to
$\Theta(\log m)$ by using local policies instead of just strongly
local policies. It remains
open whether there are local policies that perform even better than our
strongly
local ones.

%
%\section{Open Problems}
%\begin{itemize}
%\item What are possible implications of our results in the online scheduling setting (non-trivial release dates)? Does any bound carry over?
%\item All our policies are strongly local. Can we devise local policies that give better bounds?

%\end{itemize}
%

\paragraph{Acknowledgements.} We thank Tanmoy Chakraborty for helpful discussions and Ioannis Caragiannis for pointing out related literature. The fourth author also thanks Yossi Azar for initial discussions about the subject of study of this paper. Part of this work was done while the second and last authors were visiting EPFL; we thank Fritz Eisenbrand for his hospitality. 

\bibliography{schedule}
\bibliographystyle{plain}

\appendix

\section{Lower Bounds}\label{appendix:lowerbound}
\paragraph*{Deterministic non-preemptive strongly local coordination mechanisms.}
Adapting a proof of Caragiannis et al.~\cite[Theorem 7]{CFKKM06}, Correa and Queyranne~\cite[Theorem 13]{CQ10}, showed that the pure PoA of games induced by {\smi} can be arbitrarily close to $4$.
This holds even with unweighted jobs in the restricted identical machines model 
(a similar construction in~\cite{FOV08} for nonatomic prioritized selfish routing can also be easily adapted).
%The construction may be modified to show that no prompt deterministic strongly local policy achieves a price of anarchy lower than $4$ 
Based on this construction, we show the same lowerbound for arbitrary strongly local prompt policies that are deterministic and non-preemptive.
To aid in comprehension, we first demonstrate this for strongly local ordering policies (i.e., deterministic non-preemptive policies where the IIA property holds).
We then discuss the changes needed to obtain the full result, but since the arguments are of a quite different flavour from the rest of the paper, we give only a sketch.
\begin{proof}[Proof of Theorem \ref{thm:lowb}]
We begin by presenting the family of game instances that leads to pure PoA approaching 4 for games
induced by {\smi} in the restricted identical machines model~\cite{CQ10}, and then show how to generalize the lowerbound based on this construction.

There are $m$ machines and $k$ groups of jobs $g_1,\ldots, g_k$, where group $g_x$ has $m/x^2$ jobs.
We assume that $m$ is such that all groups have integer size and let $j_{x y}$ denote the $y$-th job
of the $x$-th group. A job $j_{x y}$ can be assigned to machines
$1,\ldots,y$, and we assume that for two jobs $j_{xy}$ and $j_{x'y'}$ with $y < y'$, then $j_{x'y'}$ has higher priority than $j_{xy}$ (if $y-y'$, the ordering can be arbitrary).

If every job $j_{x y}$ is assigned to machine $y$, there are exactly $m/x^2$ jobs
with completion time $x$ ($1\leq x \leq k$), which leads to a total cost of $m\sum_{x=1}^k{1/x}$. 
On the other hand, assigning each job to the machine with
smallest index among all the ones that minimize its completion time gives a PNE assignment
whose total cost is $\Omega(4m\sum_{x=1}^k{1/x})$~\cite{CQ10}.

\sparagraph{Ordering policies.}
We may of course modify the construction so that each job $i$ can be assigned to only two
machines: the machine $O_j$ to which it is assigned under \OPT, and the machine $N_j$ it assigned to under the Nash
(where $N_j\leq O_j$ for all $j$). Since the job ordering under the optimal assignment does not affect the cost, 
we only need to make sure that for any jobs $j$, $j'$ with $O_{j'} < O_j$, $j$ gets higher priority than $j'$ on $O_j$.

Given a specific lower bound instance for {\smi}, we have $n$ job \emph{slots}, each 
defined by the pair of machines $O_j$ and $N_j$. Given a set of ordering policies, each machine has its own
strictly ordered list of all $n$ jobs. What we need to do is assign a specific job to each
slot so that the ordering restrictions as specified in the previous paragraph
comply with the lists. We start from
the slot $j$ with the greatest $N_j$ machine index and we assign the first job of machine
$N_j$'s list to this slot. We then erase this job from all lists and repeat. In case of a tie,
that is if there is more than one slot with the same $N_j$, we first consider the slots with
greater $O_j$ machine index. This ensures that, given the PNE assignment, any job that 
deviates back to its OPT machine will suffer cost at least as much as in the {\smi} instance,
while its cost in the PNE is the same as in the given instance. Therefore, the assignment of each job
$j$ to machine $N_j$ is a PNE for this set of ordering policies.

\sparagraph{Removing the IIA assumption.}
%We in fact do not need the IIA assumption; allowing arbitrary prompt policies cannot improve the worst-case POA.
%The argument is rather disjoint from the rest of the article, and so we only sketch it.
We modify the above construction to have $N$ jobs, where $N$ is extremely large, and one extra machine (so we have $M=m+1$ machines). 
Each machine has an associated prompt policy (which may use job IDs); thus for any subset of jobs, the policy on a machine will specify the order that the jobs are run.
We will then choose only a small subset of $n$ jobs that will fill in the previously defined slots; the remaining jobs will all be assigned processing time $0$ on machine $m+1$, and infinity on all other machines; call such jobs \emph{spurious}.
By choosing the assignment of jobs to slots appropriately, we will be able to enforce the orderings we want on the jobs, and obtain a Nash with the same cost as before.
More precisely, we want the following, which ensures that the proposed Nash assignment is indeed an equilibrium:
\begin{enumerate}[(i)]
	\item In the Nash assignment, the ordering on any machine is exactly as we require in the previously defined construction.
	\item If we take the Nash assignment, but then any single job attempts to deviate, it will find itself at the back of the ordering. 
		More carefully: if $S_i$ is the set of jobs on machine $i$ at Nash, and we consider any job $j$ with $O_j = i$, then $j$ is last according to the ordering determined by the set $S_i \cup \{j\}$ and the policy on machine $i$.
		This ensures that nobody has an incentive to deviate.
\end{enumerate}
%We claim we can find an assignment that works, provided $N$ is sufficiently large.
To prove this, we begin with the $m$-th machine, and argue that we can find a set $S_m \subset J$, with $|S_m|$ equal to the number of slots which have machine $m$ as the Nash strategy, such that there is a very large set $Q_m \subset J$ with the following property:
\[ \text{Every job $j \in Q_m$ is last in the ordering on machine $m$ determined by the set $S_m \cup \{j\}$.}\]
We will assign $S_m$ to the slots which run on machine $m$ at \OPT, and then make all jobs outside of $S_m$ and $J_m$ spurious. 
We then repeat this process on machine $m-1$, but selecting $S_{m-1}$ and $Q_{m-1}$ as subsets of $Q_m$. 
This construction guarantees an ordering satisfying properties (i) and (ii).
The existence of the sets $S_i$ and $Q_i$ for all $i$ follows from the following easily proved combinatorial lemma, asuming that $N$ is chosen sufficiently large.
\begin{lemma}
	 Let $k$ and $r$ be integers, with $k > r$. 
	 For any subset $S$ of $[k] := \{1,2,..., k\}$, let $\pi_S$ be an ordering (permutation) of $S$, which may depend on $S$ in an arbitrary manner, and define
	 \[ Q_S := \{ j \in [k] \setminus S: j \text{ is last according to the order } \pi_{S \cup \{j\}}  \}. \]
Then there exists a subset $S$ of size $r$ so that $|Q_S| \geq (k-r)/(r+1)$.
\end{lemma}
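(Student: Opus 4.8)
The plan is a short double-counting (averaging) argument over the $(r+1)$-element subsets of $[k]$. For each subset $T \subseteq [k]$ with $|T| = r+1$, the ordering $\pi_T$ has a unique last element, which I will denote $\ell(T)$. Setting $S_T := T \setminus \{\ell(T)\}$, we have $|S_T| = r$, $\ell(T) \notin S_T$, and directly from the definition of $Q_{S_T}$ (with the role of ``$j$'' played by $\ell(T)$, so that $S_T \cup \{\ell(T)\} = T$) we get $\ell(T) \in Q_{S_T}$.

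The key observation is that $T \mapsto (S_T, \ell(T))$ is a bijection from the family of $(r+1)$-subsets of $[k]$ onto the set of pairs $(S, j)$ with $|S| = r$, $j \in [k]\setminus S$, and $j \in Q_S$. Indeed, the inverse map sends such a pair $(S,j)$ to $T := S \cup \{j\}$, which has size exactly $r+1$ since $j \notin S$, and the condition $j \in Q_S$ says precisely that $j$ is last in $\pi_{S \cup \{j\}} = \pi_T$, i.e.\ $\ell(T) = j$ and $S_T = S$. Hence, counting the pairs in two ways,
\[
\sum_{\substack{S \subseteq [k] \\ |S| = r}} |Q_S| \;=\; \#\{\,T \subseteq [k] : |T| = r+1\,\} \;=\; \binom{k}{r+1}.
\]

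Since there are $\binom{k}{r}$ subsets of size $r$, averaging yields a subset $S$ of size $r$ with
\[
|Q_S| \;\geq\; \frac{1}{\binom{k}{r}}\binom{k}{r+1} \;=\; \frac{k-r}{r+1},
\]
which is the claimed bound (note $k > r$ is exactly what makes $\binom{k}{r+1}$ nonzero, so the statement is nonvacuous). There is no real obstacle here; the only point requiring a moment's care is verifying that the stated correspondence is genuinely a bijection — that every counted pair $(S,j)$ comes from exactly one $(r+1)$-set — which is immediate because $S$ together with $j \notin S$ determines $T = S \cup \{j\}$ uniquely, and the defining property of $Q_S$ guarantees $\ell(T) = j$. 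This is essentially the ``defect'' form of a standard extremal-set averaging lemma, and the above is how I would write it.
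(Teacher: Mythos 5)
Your proof is correct and complete: the map $T \mapsto (S_T,\ell(T))$ is indeed a bijection, so $\sum_{|S|=r}|Q_S| = \binom{k}{r+1}$ and the averaging step gives exactly the stated bound. The paper states this lemma without proof (calling it ``easily proved''), and your double-counting argument is the natural — and fully rigorous — way to supply the missing details.
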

\end{proof}

\paragraph*{Deterministic strongly local mechanisms.}
We give here a lower bound that applies to \emph{any} deterministic prompt strongly local coordination mechanism, even when preemption is allowed, as long as jobs are anonymous.
%This means that when presented with a number of jobs with the same weight and processing time, the 
%For simplicity, we give here an argument for any policy that when presented with a number of jobs each of the same weight and processing time, gives the same expected completion time to all of the jobs---or in other words, jobs are completely anonymous.
%With additional technical care, the construction can be modified to handle arbitrary policies.
\begin{proof}[Proof of Proposition~\ref{prop:lbdet}]
The construction is a slight variant of one given in Caragiannis et al.~\cite{CFKKM06} for load balancing games. We define the construction in terms of the \emph{game graph}; a directed graph, with nodes corresponding to machines, and arcs corresponding to jobs. The interpretation of an arc $(i^*,i)$ is that the corresponding machine is run on $i$ at the Nash equilibrium, and $i^*$ in the optimal solution (all jobs can only be run on at most two machines in the instance we construct).

Our graph consists of a binary tree of depth $\ell$, with a path of length $\ell$ appended to each leaf of the tree. In addition, there is a loop at the endpoint of each path. All arcs are directed towards the root; the root is considered to be at depth zero. In the binary tree, on a machine at depth $i$, the processing time of any job that can run on that machine is $(3/2)^{\ell-i}$. In the chain, on a machine at distance $k$ from the tree leaves all processing times are $(1/2)^k$.

By slightly perturbing the processing times of jobs on different machines it is easily checked that if every job is run on the machine pointed to by its corresponding arc, the assignment is a pure NE. The latter holds for arbitrary prompt strongly local coordination mechanisms so long as jobs are anonymous. On the other hand, if all jobs choose their alternative strategy, we obtain the optimal solution. A straightforward calculation  shows that, in the limit $\ell \rightarrow \infty$, the ratio of the cost of the NE to the optimal cost converges to $13/6>2.166$.
\end{proof}

\paragraph*{\rand.} The previous instance can be easily modified to give a lower bound on the performance of \rand. Just take the same instance but replace $3/2$ by $4/3$ and $1/2$ by $2/3$. The same assignment then gives a PNE, and in this case the ratio of interest approaches $5/3$.

%\section{Omitted Proofs from Section~\ref{sec:better}}\label{appendix:better}

%\section{Cost Formula for {$\mathbf{ProportionalSharing}$}}
%\section{{$\mathbf{ProportionalSharing}$} with Unweighted Jobs}\label{appendix:equal}
%\todo{The raltion between the last sum and the integral is not clear. It seems to hold but why? Do we need a new version? How Maple does the sum?}

\section{The performance of $\mathbf{Rand}$ on a single machine}\label{appendix:randsingle}
%We give here an alternative proof for the performance of \Rand\ on a single machine.
\begin{proof}[Proof of Theorem~\ref{thm:chung}]

We want to prove that for any sequence $u_1,\ldots,u_k$, $u_i \geq 0$, the following inequality holds:
\[
\sum_i\sum_j u_iu_j \frac{ij}{i + j} \leq \frac{\pi}{4}  \sum_i\sum_j u_iu_j \min\{i,j\}.
\]
We will in fact prove that for any sequence $x_1, x_2, \ldots, x_n$, $x_i \in \N$, 
\begin{equation}\label{eq:hilberty}
\sum_i\sum_j \frac{x_ix_j}{x_i+x_j} <  \frac{\pi}{4}  \sum_i\sum_j \min\{x_i,x_j\}.
\end{equation}
This implies the inequality in the statement, for the choice
$u_r = |\{i: x_i =  r\}|$, and hence clearly for any integer sequence $(u_i)$. An obvious scaling argument then gives it for general nonnegative $u_i$.

Since both summations in~\eqref{eq:hilberty} are symmetric, we may assume without loss of generality that $x_1\ge \cdots \ge x_n\ge 0$. Then, we note that $\sum_{i=1}^n\sum_{j=1}^n \min\{x_i,x_j\}= 2 \sum_{i=1}^n x_i(i-1/2)$. Also, observe that the inequality is homogeneous so that proving the inequality is equivalent to proving that the optimal value of the following concave optimization problem is less than $\pi/2$:
\[
z= \max\left\{\sum_{i=1}^n\sum_{j=1}^n \frac{x_ix_j}{x_i+x_j}:\text{ s.t. }
\sum_{i=1}^n x_i(i-1/2)=1,\,  x_1\ge \cdots \ge x_n\ge 0 \right\}.
\]
Clearly $z\le z'$, where 
\[
z'= \max\left\{\sum_{i=1}^n\sum_{j=1}^n \frac{x_ix_j}{x_i+x_j}:\text{ s.t. }
\sum_{i=1}^n x_i(i-1/2)=1,\,  x_i \ge 0 \text{ for all }i=1,\ldots,n \right\}.
\]
Furthermore, we may assume that in an optimal solution all variables satisfy $x_i>0$. 
Otherwise, we could consider the problem in a smaller dimension. 
Thus, the KKT optimality conditions state that for all $i=1,\ldots,n$ we have
\begin{equation}
	\mu (i-1/2) = 2 \sum_{j=1}^n \left(\frac{x_j}{x_i+x_j}\right)^2. \label{eq:kkt}
\end{equation}
Multiplying by $x_i$, summing over all $i$, and using $\sum_{i=1}^n x_i(i-1/2)=1$, we obtain:
\[
\mu = 2 \sum_{i=1}^n\sum_{j=1}^n x_i\left(\frac{x_j}{x_i+x_j}\right)^2
=\sum_{i=1}^n\sum_{j=1}^n \frac{x_ix_j}{(x_i+x_j)^2}(x_i+x_j)=z'.
\]
Now consider~\eqref{eq:kkt} with $i^*=\arg\max_{i} x_i(i-1/2)^2$. We have that
\[
z' = \frac{2}{i^*-1/2} \sum_{j=1}^n \left(\frac{x_j}{x_{i^*}+x_j}\right)^2 \le 
2(i^*-1/2)^3 \sum_{j=1}^\infty \left(\frac{1}{(i^*-1/2)^2+(j-1/2)^2}\right)^2.
\]
Using standard complex analysis it can be shown that the latter summation equals 
\[ (\pi/2)\bigl((i^*-1/2) \pi \tanh(\pi (i^*-1/2))^2+ \tanh(\pi (i^*-1/2))-\pi (i^*-1/2)\bigr), \]
which is less than $\pi/2$.
%Letting $y=i^*-1/2$ we can bound the previous quantity by
%\[
%z' < 2 y^3 \int_0^\infty \left(\frac{1}{y^2+x^2}\right)^2dx=y^3 \int_{-\infty}^\infty \left(\frac{1}{y^2+x^2}\right)^2dx=\frac{\pi}{2};
%\]
%the last equality follows by applying Cauchy's residue Theorem.
\end{proof} 

%\section{Potentials and a Combinatorial Algorithm}\label{appendix:potential}

\section{A Reduction from Prioritized Selfish Routing}\label{appendix:reduction}
In this appendix, we show that in the unweighted case, and using \shf, the scheduling games under consideration form a special cases of the priority selfish routing games defined in~\cite{FOV08}. This suffices to give upper bounds on the \poa\ for \shf, and in fact the correct bound if the nonatomic case is considered. 

A \emph{priority selfish routing game} is defined as follows (except here, we will restrict ourselves to the unweighted case, where all players have unit demand).
We are given a directed graph $G$, and a set of players $j=1,\ldots, n$; each player has an associated source-sink pair $(s_j, t_j)$, and must pick as their strategy some $s_j$-$t_j$ path to route their demand.
%In general, each player also has a size (weight), but we need only consider the unit case.

Each arc $e$ has an associated cost function $f_e$, which we will take to be linear; $f_e(x) = a_ex + b_e$.
In the standard selfish routing game, the cost or delay experienced by a player using edge $e$ is given by $f_e(x_e)$, where $x_e$ is the load on the edge, i.e.~(taking unit demands) the number of players using that edge.
The total cost associated to an edge is then $x_ef_e(x_e)$.
In the priority selfish routing model on the other hand, the total cost for an edge will be $\int_0^{x_e} f_e(z)dz$, the ``area under the curve''.
This is split between the players using an edge, according to some ordering $\prec_e$ of the players using edge $e$:
The $t$'th player in the ordering pays an amount $\int_{t-1}^t f_e(x)dx$.
The ordering $\prec_e$ can be very general, and may depend on the strategies chosen by all the players (even those not using edge $e$).

In~\cite{FOV08}, it is shown that in this model, the price of anarchy is at most $17/3$ in the setting described above, which improves to $4$ in the nonatomic case where any individual player is negligible.
These upper bounds hold for any priority ordering.

We are now ready to describe the reduction.
Begin with an instance of the scheduling game, with policy given by \shf.
By scaling if necessary, assume that all finite $\p{i,j}$ satisfy $\p{i,j} \leq 1$, and let $Q$ be such that $Q\cdot \p{i,j} \in \N$ for all finite $\p{i,j}$.
We construct a graph $G$ as follows. There is a single sink node $t$ which will be the destination for all players.
Each machine $i$ will correspond to a path $P_i$ of length $Q$, and the cost function of each edge on the path will be simply $f_e(x) = x / Q$.
Connect the end of each path to a common sink node $t$, with zero cost edges.

Now for each job $j$, we will have a source node $s_j$, and a player with source $s_j$ and destination $t$.
For each machine $i$, we add an arc from $s_j$ to a node $v_{ij}$ in the path corresponding to $i$, such that the fraction of the path between $v_{ij}$ and the end of the path (towards $t$) is exactly $\p{i,j}$.
The cost of this arc will be a constant $\p{i,j}/2$.
To complete the definition of the priority selfish routing instance, we define the priority ordering on any edge in $P_i$ according to \shf, in increasing order of $\p{i,j}$.

There is a natural correspondence between an assignment in the scheduling problem and a routing in the priority routing problem. If job $j$ uses machine $i$, then route $j$ from $s_j$ to $v_{ij}$ and then to $t$.

\begin{lemma}
  For any job $j$, the completion time $\ct_j$ of the job in the scheduling instance is the same as the amount $C_j$ player $j$ pays in the derived priority routing instance.
\end{lemma}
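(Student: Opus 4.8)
The plan is to fix a job $j$, assigned to machine $i$, and compute the routing payment $C_j$ directly, by adding up what player $j$ pays on each arc of its route $s_j \to v_{ij} \to t$. Under \shf\ with unit weights, the completion time of $j$ is $c_j = \sum_{k:\, x_k = i,\ p_{ik} \le p_{ij}} p_{ik}$, so it suffices to recover this quantity on the routing side. For bookkeeping, list the machine-$i$ jobs of processing time at most $p_{ij}$ as $k_1, \ldots, k_r = j$ with $p_{i k_1} \le \cdots \le p_{i k_r} = p_{ij}$ (breaking ties consistently with the \shf\ ordering), and set $p_{i k_0} = 0$.

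First I would fix coordinates on the path $P_i$, indexing its $Q$ edges by ``depth'' $d \in \{1, \ldots, Q\}$ counted from the sink end, so that the depth-$d$ edge occupies path-fraction $[\,1 - d/Q,\ 1 - (d-1)/Q\,]$. Since the node $v_{ik}$ lies at path-fraction $1 - p_{ik}$ from the start and $Q\, p_{ik} \in \mathbb{N}$, job $k$ uses exactly the edges of depths $1, \ldots, Q\, p_{ik}$; in particular $j$ uses the $Q\, p_{ij}$ edges of depths $1, \ldots, Q\, p_{ij}$, and on the depth-$d$ edge the set of machine-$i$ jobs present is exactly $\{k : x_k = i,\ p_{ik} \ge d/Q\}$.

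The crux is that, as $d$ ranges over a block $Q\, p_{i k_{t-1}} < d \le Q\, p_{i k_t}$, this set --- and hence the rank of player $j$ in the \shf\ priority ordering on those edges --- stays constant: the jobs running ahead of $j$ there (smaller processing time, still present) are exactly $k_t, \ldots, k_{r-1}$, so $j$ has rank $r - t + 1$, while the block contains $Q(p_{i k_t} - p_{i k_{t-1}})$ edges. Since each path edge has cost function $f_e(z) = z/Q$, the rank-$\rho$ player on it pays $\int_{\rho-1}^{\rho} z/Q \, dz = (2\rho - 1)/(2Q)$. Summing over the blocks $t = 1, \ldots, r$ and adding the payment $p_{ij}/2$ on the constant arc $s_j \to v_{ij}$ gives
\[ C_j = \frac{p_{ij}}{2} + \frac{1}{2} \sum_{t=1}^{r} \bigl(p_{i k_t} - p_{i k_{t-1}}\bigr)\bigl(2(r - t) + 1\bigr). \]

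The final step is to simplify this. Writing $\delta_t = p_{i k_t} - p_{i k_{t-1}}$, so that $\sum_{t=1}^{r} \delta_t = p_{ij}$, the right-hand side collapses to $p_{ij} + \sum_{t=1}^{r} \delta_t (r - t) = p_{ij} + \sum_{u=1}^{r-1} \sum_{t=1}^{u} \delta_t = \sum_{u=1}^{r} p_{i k_u} = c_j$, as desired. I expect the only slightly delicate part to be the bookkeeping that ties together path-fractions, edge depths, priority ranks and processing times into the block decomposition above; once that is set up, the remainder is a one-line telescoping identity. (Equal processing times cause no difficulty provided the within-block priority order matches the tie-breaking used by \shf.)
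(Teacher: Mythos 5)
Your proof is correct and follows essentially the same route as the paper's: both compute player $j$'s per-edge payment as the trapezoidal area $\int_{\rho-1}^{\rho} z/Q\,dz$ determined by $j$'s \shf\ rank and sum along the path, arriving at $C_j = p_{ij} + \sum_{k:\,p_{ik}<p_{ij}} p_{ik}$. The only difference is organizational — you sum over blocks of edges on which $j$'s rank is constant and telescope, whereas the paper attributes a delay of $p_{ik}$ directly to each smaller job $k$ — and both are fine.
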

\begin{proof}
Suppose job $j$ uses machine $i$. 
In the routing instance, all larger (w.r.t.\ processing time) jobs on the edge will not affect job $j$'s cost, since shorter jobs have higher priority,
All smaller jobs on the other hand will cause delays, on some subset of the edges on the path.
In particular, a job $k$ with $\p{i,k} < \p{i,j}$ causes a delay of $1$ on a fraction $\p{i,k}$ of the edges on machine $i$'s path.
On an edge with $\ell$ players ahead of player $j$, $j$ will pay an amount given by a trapezoidal area: $\tfrac1{2Q}(\ell + (\ell+1)) = \tfrac{\ell+1/2}{Q}$.
Summing up the costs over all edges used by $j$, we get
\begin{equation*}
  C_j = \p{i,j}/2 + \sum_{k: \p{i,k} < \p{i,j}} \p{i,k}\, +\, \p{i,j}/2
  = \p{i,j} + \sum_{k: \p{i,k} < \p{i,j}} \p{i,k}. \qedhere
\end{equation*}
\end{proof}

It follows immediately that the Nash equilibria of the scheduling game and the derived priority routing coincide, and that social costs are also the same.
Thus the worst-case \poa\ of the unweighted scheduling game is no worse than the worst-case price of anarchy in the unweighted priority routing model, i.e., $17/3$.

\end{document}